\theoremstyle{plain}
\newtheorem*{theorem*}{Theorem}
\newtheorem{theorem}{Theorem}
\newtheorem{lemma}{Lemma}
\newtheorem{proposition}{Proposition}
\def\sr{\textsc{Sr-Adc}}
\def\fo{f_{{\sf{Out}}}}
\def\fin{f_{\sf{In}}}
\def\etal{\emph{et al.~}}
\newcommand{\PW}[1]{\mathsf{PW}_{#1}}
\renewcommand\leq{\leqslant}
\renewcommand\geq{\geqslant}
\newcommand{\ud}[2]{\mathrm{unif}\left(#1,#2\right)}
\def\YN{{\color{black}y _{\eta}}}
\newcommand{\DA}[1]{\stackrel{(\mathrm{#1})}{=}}
\def\l{\left(}
\def\r{\right)}
\def\T{{\color{black}\mathrm{T}}}
\def\TE{{\color{black}\mathrm{T}_\epsilon}}
\def\TS{{\color{black}\mathrm{T}_{\sf{Shannon}}}}
\def\TUS{{\color{black}\mathrm{T}_{\sf{US}}}}
\def\Z{\in \mathbb{Z}}
\def\R{\in \mathbb{R}}
\def\DE{\stackrel{\rm{def}}{=}}
\def\fo{f_{{\sf{Out}}}}
\def\fin{f_{\sf{In}}}
\def\B{\beta_g}
\def\DR{\mathsf{DR}}
\def\S{\mathsf{S}}
\def\bo{b_0}
\newcommand{\fig}[1]{Fig.~\ref{#1}}
\newcommand{\md}[2]{#1~\rm{mod}~#2}
\newcommand{\ab}[1]{{\color{black} #1}}
\newcommand{\aq}[1]{{\color{black} #1}}
\newcommand{\vi}[1]{{\color{black} #1}}
\newcommand{\MO}[1]{\mathscr{M}_\lambda\l #1 \r}
\newcommand{\VO}[1]{\varepsilon_{#1}}
\newcommand{\VOB}[1]{\color{black}\Delta^{N}{\varepsilon}_ {#1}}
\newcommand{\DN}[2]{\color{black}{\Delta^{#1}}#2}
\newcommand{\ND}[1]{\color{black}{\Delta^{N}{#1}}}
\newcommand{\ft}[1]{\left[\kern-0.15em\left[#1\right]\kern-0.15em\right]}
\newcommand{\fe}[1]{\left[\kern-0.30em\left[#1\right]\kern-0.30em\right]}
\newcommand{\flr}[1]{\left\lfloor #1 \right\rfloor}
\newcommand{\DL}[1]{\stackrel{(\mathrm{#1})}{=}}
\newcommand{\EQc}[1]{\stackrel{(\ref{#1})}{=}}
\newcommand{\BL}[1]{#1 \in \mathcal{B}_{\Omega}}
\newcommand{\BLP}[1]{#1 \in \mathcal{B}_{\pi}}
\def\ind{\mathbbmtt{1}}
\def\moverlay{\mathpalette\mov@rlay}
\def\mov@rlay#1#2{\leavevmode\vtop{%
   \baselineskip\z@skip \lineskiplimit-\maxdimen
   \ialign{\hfil$\m@th#1##$\hfil\cr#2\crcr}}}
\newcommand{\charfusion}[3][\mathord]{
    #1{\ifx#1\mathop\vphantom{#2}\fi
        \mathpalette\mov@rlay{#2\cr#3}
      }
    \ifx#1\mathop\expandafter\displaylimits\fi}
\newcommand{\bigcupdot}{\charfusion[\mathop]{\bigcup}{\cdot}}
\begin{document}

\title{On Unlimited Sampling and Reconstruction}

\author{Ayush~Bhandari,~Felix~Krahmer,~and~Ramesh~Raskar

\thanks{This work was presented in parts at the Intl. Conf. on Sampling Theory and Applications (SampTA), \cite{Bhandari:2017,Bhandari:2019}. A version of this work was included in \cite{Bhandari:2018} and led to the US Patent \cite{Bhandari:2020b}. A.~Bhandari's work is supported by the UK Research and Innovation council's FLF program ``Sensing Beyond Barriers'' (MRC Fellowship award no.~MR/S034897/1). F.~Krahmer acknowledges support by the German Science Foundation (DFG) in the context of the collaborative research center TR 109.}
\thanks{A.~Bhandari was with the Massachusetts Institute of Technology. He is now with the Dept. of Electrical and Electronic Engineering, Imperial College London, South Kensington, London SW7 2AZ, UK. (Email: ayush@alum.mit.edu)}
\thanks{F.~Krahmer is with the Dept. of Mathematics, Technical University of Munich, Boltzmannstra{\ss}e 3, 85748 Garching/Munich, Germany. (Email: felix.krahmer@tum.de)}
\thanks{R.~Raskar is with Media Lab, Massachusetts Institute of Technology, 77 Massachusetts Ave. Cambridge 02139, MA, USA. (Email: raskar@mit.edu)}

}

\markboth{\sf{IEEE Trans. on Sig. Proc. (10.1109/TSP.2020.3041955)}}%
{AB \MakeLowercase{\textit{et al.}}: On Unlimited Sampling}

\maketitle

\begin{abstract} 

Shannon's sampling theorem is one of the cornerstone topics that is well understood and explored, both mathematically and algorithmically. That said, practical realization of this theorem still suffers from a severe bottleneck due to the fundamental assumption that the samples can span an arbitrary range of amplitudes. In practice, the theorem is realized using so-called analog--to--digital converters (ADCs) which clip or saturate whenever the signal amplitude exceeds the maximum recordable ADC voltage thus leading to a significant information loss. 

In this paper, we develop an alternative paradigm for sensing and recovery, called the {\em Unlimited Sampling Framework}. It is based on the observation that when a signal is mapped to an appropriate bounded interval via a modulo operation before entering the ADC, the saturation problem no longer exists, but one rather encounters a different type of information loss due to the modulo operation. Such an alternative setup can be implemented, for example, via so-called folding or self-reset ADCs, as they have been proposed in various contexts in the circuit design literature.

The key task that we need to accomplish in order to cope with this new type of information loss is to recover a bandlimited signal from its modulo samples. In this paper we derive conditions when perfect recovery is possible and complement them with a stable recovery algorithm. The sampling density required to guarantee recovery is independent of the maximum recordable ADC voltage and depends on the signal bandwidth only. Our recovery guarantees extend to measurements affected by bounded noise, which includes the case of round-off quantization. 


Numerical experiments validate our approach. For example, it is possible to recover functions with amplitudes orders of magnitude higher than the ADC's threshold from quantized modulo samples upto the unavoidable quantization error. 


Applications of the unlimited sampling paradigm can be found in a number of fields such as signal processing, communication and imaging.
\end{abstract}

\begin{IEEEkeywords}
Analog-to-digital conversion (ADC), approximation, bandlimited functions, modulo, Shannon sampling theory.
\end{IEEEkeywords}

\newpage

\tableofcontents

\newpage

%
\IEEEpeerreviewmaketitle

\linespread{1.2}
\section{Introduction}

\IEEEPARstart{T}{he} interplay between the continuous and the discrete realms is at the heart of all modern information processing systems. Thus, the sampling theory inevitably finds its way in different spheres of science and engineering. 

Since its early days, the topic has grown far and wide \cite{Butzer:1992,Zayed:1993,Unser:2000}. Researchers and practitioners have proposed a number of extensions. When we think of sampling theory, in most cases, variation on the theme arises from diversity along the \emph{time-dimension}. This could be due to priors (sparsity or smoothness) or sampling architecture (uniform or non-uniform sampling grid). On the other hand, a hypothesis on the {\em amplitude dimension} leads to a different class of sampling problems, namely, level cross sampling, amplitude sampling, quantization and phaseless reconstruction. Our work falls into the latter category. In some sense, it can be seen as an amplitude analogue of the time-warped sampling theory \cite{Cochran:1990}, as in, our work exploits warping along the amplitude axis. 

\subsection{The Sampling Theorem and a Practical Bottleneck}
The key result of sampling theory is Shannon's sampling theorem \cite{Unser:2000}, which states that functions of bounded frequency range can be perfectly recovered from equidistant samples. To realize this theorem in practice, one uses the so-called analog--to--digital converter, henceforth ADC. Unlike the input of the sampling theorem which are assumed to be perfect samples, the output of an ADC is clipped at some fixed maximum recordable voltage; ADCs are limited by their \emph{dynamic range}. This means that only values in a particular range, say, $\left[ { - \lambda ,\lambda } \right]$, may be recorded. We will refer to $\lambda>0$ as the ADC \emph{threshold}. Whenever the input signal amplitude exceeds the threshold, that is $ | \fin| > \lambda$, the ADC saturates and essentially outputs $\fo = \lambda$. This is shown in \fig{fig:adc} (a) and \fig{fig:adc} (b). Such cases cannot be handled by sampling theory. 

Clipping or saturation poses a serious problem in a variety of applications. For example, a camera facing the sun leads to an all white photograph which is the basis of \emph{high dynamic range} photography \cite{Debevec:1997}. This effect is particularly important in the context of self-driving vehicles. Scenarios such as moving out of a tunnel in the day \cite{Yamada:1998} or a headlight in the line-of-sight {can cause the imaging sensors to saturate}. Similarly, in scientific imaging systems such as ultrasound \cite{Olofsson:2005}, radar \cite{Cassidy:2009} and seismic imaging \cite{Zhang:2016}, strong reflections or pulse echoes blind the sensor. In communication systems, clipping results in performance degradation \cite{Li:2015}. In the context of music, clipped sound results in high frequency artifacts \cite{Adler:2012}. 

\begin{figure*}[!t]
\centering
\includegraphics[width =1\textwidth]{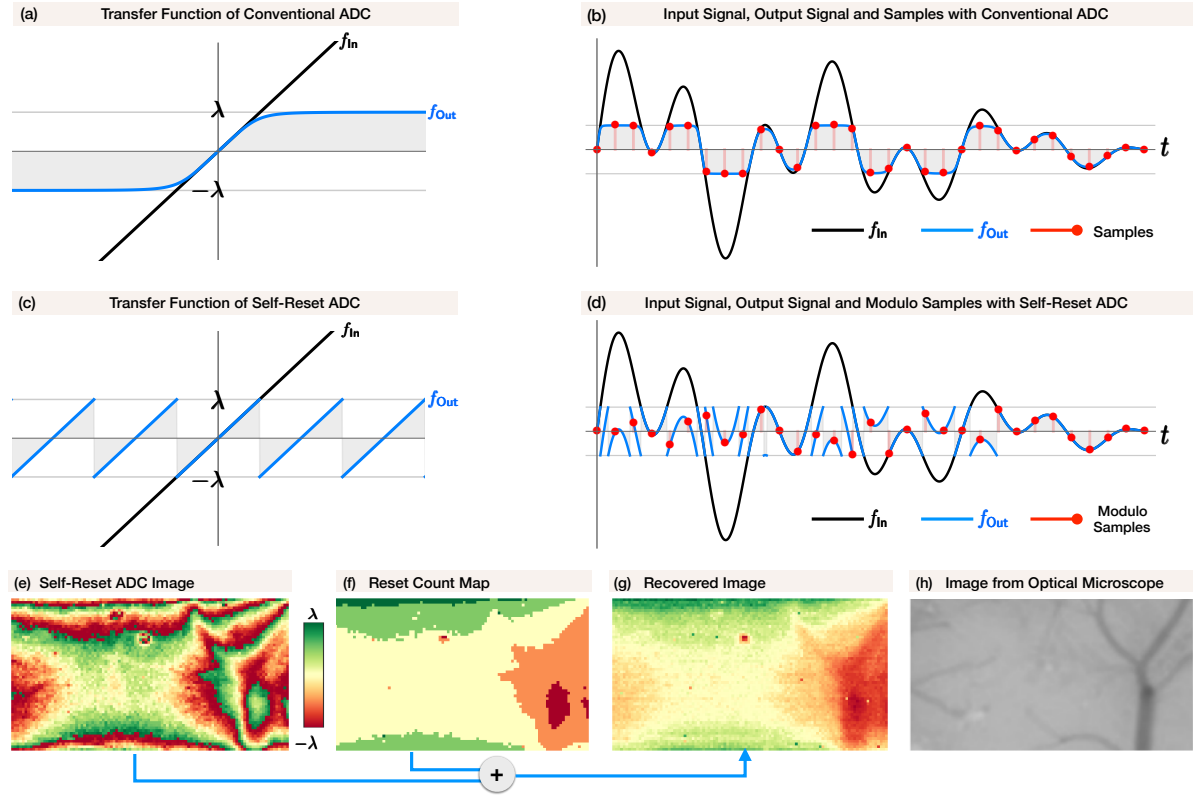}
\caption{Conceptual difference between conventional ADC and the self-reset ADC. (a) Transfer function of conventional ADC. Whenever the maximum amplitude of the input signal $\fin$ exceeds a certain threshold $\lambda$, the output signal $\fo$ saturates to $\lambda$ and this results in clipping. (b) Input signal, output signal and pointwise samples in case of conventional ADC. (c) Transfer function of self-reset ADC. In contrast to conventional ADCs, whenever $| \fin |>\lambda$, the self-reset ADC folds $\fin$ (by using modulo non-linearity) such that $\fo$ is always in the range $\left[-\lambda,\lambda\right]$. In this way, the self-reset configuration circumvents clipping but introduces discontinuities. (d) Input signal, output signal and pointwise modulo samples in case of self-reset ADC. (e) Image obtained with a prototype self-reset ADC shows folded amplitudes \cite{Rhee:2003,Sasagawa:2016,Yamaguchi:2016}. (f) For each pixel, the ``reset count map'' shows the number of times the image amplitude has undergone folding. (g) Unfolded image is obtained by adding reset count map to the folded image. (h) Corresponding optical image (ground truth).}
\label{fig:adc}
 \end{figure*}

Due to the pervasiveness of the clipping effect, several papers have studied this problem in different contexts (cf.~\cite{Abel:1991,Olofsson:2005,Adler:2012,Ting:2013}) and particular emphasis has been put on audio signals (cf.~\cite{Adler:2012} and references therein)---specific examples of bandlimited signals. In general, clipping of a smooth signal results in non-smooth features which in turn corresponds to high-frequency distortions in the signal. Hence, clipped or saturated signals are prone to aliasing artifacts \cite{Esqueda:2016}. 

On one hand, purely computational approaches involve de-clipping or restoration algorithms \cite{Logan:1984,Janssen:1986,Abel:1991,Ting:2013,Esqueda:2016} with the hope of recovering lost samples under certain assumptions. Such approaches suffer from two basic drawbacks: 
\begin{itemize}
\item Recovery guarantees in terms of the sampling density are largely {unexplored}. Ideally, the required sampling rate should be completely governed by the signal bandwidth. 

\item The mapping is discrete to discrete rather than incorporating sampling theory and following the goal of minimizing the error of the reconstructed continuous signal.
\end{itemize}
On the other hand, hardware-only approaches tackle the dynamic range problem at the ADC level \cite{Vanderperren:2006,Smaragdis:2009,Anderson:2009}. This results in sophisticated electronic architecture that is application specific and is agnostic to the benefits offered by computational and algorithmic methods.

\subsection{A Solution via Modular Arithmetic} 
\label{sec:SMA}

Unlike the literature discussed above which relies either entirely on computational approaches or only optimizes the hardware, our work is based on a \emph{co-design} approach; we aim to overcome the dynamic range barrier by repurposing hardware in conjunction with new recovery algorithms. 

The key {innovation} of our approach is that instead of (potentially clipped) pointwise samples of the bandlimited function, we work with folded amplitudes with values in the range $\left[ { - \lambda ,\lambda } \right]$. Mathematically, this folding corresponds to injecting a non-linearity in the sensing process. This amounts to,
\begin{equation}
\label{map}
\mathscr{M}_{\lambda}:f \mapsto 2\lambda \left( {\fe{ {\frac{f}{{2\lambda }} + \frac{1}{2} } } - \frac{1}{2} } \right),
\end{equation}
where $\ft{f} \DE f - \flr{f} $ defines the fractional part of $f$ and $\lambda>0$ is the ADC threshold. Note that (\ref{map}) is equivalent to a centered modulo operation since $\mathscr{M}_{\lambda}(f)\equiv \md{f}{2\lambda}$. By implementing the mapping \eqref{map}, it is clear that out-of-range amplitudes are \emph{folded} back into the dynamic range $\left[ { - \lambda ,\lambda } \right]$.

To connect this mathematical conceptualization to real world applications, we capitalize on recent advances in the imaging and sensor design technology. Indeed, ADC architectures {that are} moving towards the goal of folded sampling are rapidly developing. We have dedicated Section~\ref{sec:Background} to the discussion of such ADCs. In essence, the so-called \emph{self-reset} \cite{Rhee:2003} or \emph{folding} \cite{Kester:2009} ADCs implement folding of amplitudes via \eqref{map} using electronic circuitry. We compare the transfer function of the conventional ADC with the self-reset ADC (henceforth \sr{}) in \fig{fig:adc} (a) and \fig{fig:adc} (c), respectively. Folding the amplitudes ensures that the entire range of the ADC is utilized (cf.~\fig{fig:adc} (c) and \fig{fig:adc} (d)). To give the reader an idea of the functionality of this new breed of ADCs, we show the raw output of the prototype\footnote{We thank Sasagawa \cite{Sasagawa:2016}, Yamaguchi \cite{Yamaguchi:2016} and co-workers for providing the data which allowed us to reproduce these images.} \sr{} arising in the context of imaging in \fig{fig:adc} (e) and \fig{fig:adc} (f).

While the recent decade has seen remarkable progress on the hardware aspects of the new ADCs, theoretical and algorithmic aspects has not been a major research focus and did not make their way to other fields. Current \ab{hardware} literature \cite{Rhee:2003,Sasagawa:2016,Yamaguchi:2016} employs a fairly elementary approach for recovery that uses both the modulo samples and the residuals or reset counts\footnote{As per current approaches, any function (signal or image) can be written as a sum of the folded function and the reset count map. By recording both the folded function as well as the reset count map simultaneously, the original function can be recovered by a straight-forward sum of the two entities. For example, the image in \fig{fig:adc} (g) is obtained by adding \fig{fig:adc} (e) and (f).} (see \fig{fig:adc} (g)). This requires,
\begin{itemize}
  \item complex circuitry and ADC architectures as well as,
  \item additional power and storage.
\end{itemize} 
Furthermore, the feasibility of the reset count for arbitrarily small modulo thresholds has not been investigated yet.

In view of this discussion, we aim for an approach that does not suffer from these shortcomings and recovers the signal without knowledge of residuals or reset-counts. In full generality, this is a very difficult problem, which may also explain the limited progress. Namely, without further assumption on the underlying image or signal, the problem is closely related to the phase unwrapping problem, which is known to be highly ill-posed \cite{Zhao:2015} \aq{and typically relies on inversion of the first-order finite difference (cf.~Section~\ref{sec:RPmod} and Itoh's condition.).} Indeed, in both cases, one seeks to ``unwrap'' a discrete function representation from modulo information. \aq{Consequently, algorithms proposed for phase unwrapping also yield solution strategies for the signal unfolding problem \cite{Rieger:2009,Chen:2014}.}

\aq{As in the case of phase unwrapping, however,  these approaches suffer from strong limitations of the dynamic range (cf.~\fig{fig:Result_1} for a numerical comparison).} For more information on the phase unwrapping problem, we refer to the literature overview in Section~\ref{sec:RPmod}.

An important difference, however, is that for the scenario studied in this paper, one has a considerably larger degree of control of the data entering the sensing pipeline. In particular, it is possible to sense redundant information, which in many other setups has been shown to alleviate the ill-posedness and allow for guaranteed recovery. Examples include sigma-delta quantization of bandlimited functions \cite{Daubechies:2003} and compressed sensing \cite{Candes:2006,Donoho:2006}. In analogy to these works, our goal is to explore redundant representations that allow to overcome the limitations of the conventional viewpoint of phase unwrapping.

\subsection{Contributions and Overview of Results}

The goal of this paper is to pose and study the inverse problem of recovering a continuous-time bandlimited function from its noisy folded samples \emph{without} requiring the knowledge of residuals or reset-counts. Our key contributions are as follows:

\begin{enumerate}[leftmargin=25pt,label= $\textrm{C}_\arabic*)$,itemsep=5pt ]

\item We take a first step towards a sampling theory for modulo samples. To this end, 

\begin{itemize}

\item We establish that a finite energy bandlimited function is identifiable by its modulo samples even when the sampling rate is just slightly above Nyquist.

\item We present a sampling theorem which describes sufficient conditions for sampling and reconstruction of bandlimited functions in the context of folded samples. 

\item We show that this sampling theorem includes stability with respect to bounded noise thus covering scenarios with round-off quantization. 

\end{itemize}

\item On the algorithmic front, our sufficiency condition is complemented by a constructive recovery algorithm that is guaranteed to recover the underlying signal and is stable with respect to noise both in theory and in experiments.

\end{enumerate}
Our main contribution leads to the \emph{Unlimited Sampling Theorem} which is summarized below.

\begin{theorem*}[Unlimited Sampling Theorem] Let $g\l t\r$ be a finite energy, bandlimited signal with maximum frequency $\Omega$ and let ${y[k]},k \in \mathbb{Z}$ in (\ref{map}) be the modulo samples of $g\l t \r$ with sampling rate $\T$. Then a sufficient condition for recovery of $g\l t \r$ from $\{y[k]\}_k$ is that $\T \leq \frac{1}{2\Omega e}$ ({up to additive multiples of $2\lambda$}) where $e$ denotes Euler's constant.%
\end{theorem*}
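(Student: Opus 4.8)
My plan is to reduce the theorem to a discrete unfolding problem for the sample sequence and then play off the very different ways in which high-order finite differences act on a bandlimited signal versus on the ``lost'' integer multiples of $2\lambda$. Write $\gamma[k]\DE g(k\T)$ and $\varepsilon[k]\DE\gamma[k]-y[k]$, and recall $\Delta\gamma[k]=\gamma[k+1]-\gamma[k]$. Since $\T\leq\frac{1}{2\Omega e}<\frac{1}{2\Omega}$ lies strictly below the Nyquist rate, once $\{\gamma[k]\}_k$ is recovered the function $g$ is reconstructed by the cardinal series, so it suffices to unfold the samples. Two structural facts drive everything. First, because $\mathscr{M}_\lambda$ is exactly reduction modulo $2\lambda$, the sequence $\varepsilon$ is $2\lambda\mathbb{Z}$-valued, hence so is every finite difference $\Delta^N\varepsilon$, while $\mathscr{M}_\lambda$ is invariant under adding any $2\lambda\mathbb{Z}$-valued sequence. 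Second, a finite-energy $\Omega$-bandlimited $g$ is bounded, $\beta_g\DE\|g\|_\infty<\infty$ (Cauchy--Schwarz on the inverse Fourier integral); Bernstein's inequality gives $\|g^{(N)}\|_\infty\leq(c\Omega)^N\beta_g$, and writing the $N$-th difference as an $N$-fold integral of $g^{(N)}$ yields $\|\Delta^N\gamma\|_\infty\leq(c\Omega\T)^N\beta_g$ for a fixed constant $c$ attached to the frequency normalization.

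The first step is to annihilate the bandlimited part. Choosing the order $N$ large enough that $(c\Omega\T)^N\beta_g<\lambda$ -- possible as soon as $c\Omega\T<1$, which the hypothesis comfortably provides, with $N=O(\log(\beta_g/\lambda))$ -- and applying $\Delta^N$ to $y=\gamma-\varepsilon$ followed by $\mathscr{M}_\lambda$, the first structural fact collapses the $2\lambda\mathbb{Z}$-valued term and the smallness of $\Delta^N\gamma$ removes the outer modulo: $\mathscr{M}_\lambda(\Delta^N y)=\mathscr{M}_\lambda(\Delta^N\gamma)=\Delta^N\gamma$. Thus $\Delta^N\gamma$, and hence the known $2\lambda\mathbb{Z}$-valued sequence $\Delta^N\varepsilon=\Delta^N\gamma-\Delta^N y$, is computed directly from the measurements.

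The second step is to integrate back, recovering $\gamma$ from $\Delta^N\gamma$ by anti-differencing $N$ times. Each anti-difference introduces one additive constant, so a priori $\gamma$ is pinned down only modulo $\ker\Delta^N$, i.e.\ modulo a polynomial of degree $<N$ in $k$; but a nonconstant polynomial is unbounded, and the samples of a finite-energy $g$ are bounded, so all but the degree-zero coefficient must vanish. (This already gives identifiability under essentially Nyquist sampling; the stronger constant $e$ is the price of a constructive, stable inversion.) To make the inversion explicit I would carry out the anti-differences one order at a time, at each level pinning the new constant by combining the already-reconstructed higher-order difference, the $2\lambda\mathbb{Z}$-valuedness of the intermediate $\Delta^s\varepsilon$, and the amplitude bound $\|\Delta^s\gamma\|_\infty\leq(c\Omega\T)^s\beta_g$; the surviving degree-zero constant is the unavoidable global additive multiple of $2\lambda$ in the statement, and inserting $\gamma=y+\varepsilon$ into the cardinal series returns $g$.

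The main obstacle -- and the step that fixes the numerical value of the threshold -- is carrying out the integration at a sampling rate that does not depend on the amplitude $\beta_g$. The naive per-level pinning of constants requires the intermediate differences to sit inside a window of width comparable to $2\lambda$, which is not automatic at low orders $s$; the remedy is to track, across all $N$ descent steps performed over a finite block of length $L$, the combinatorial weights $\binom{L}{0},\dots,\binom{L}{N}$ generated by the $N$-fold summation, and to tune $L$ and $N$ jointly. Schematically the feasibility condition reads ``$\exists N:\ (c\Omega\T)^N$ beats $\lambda/\beta_g$ while an $N$-dependent combinatorial penalty stays controlled'', and its optimization over $N$ -- with $N\asymp\log(\beta_g/\lambda)$ at the optimum and Stirling's formula $N!\sim(N/e)^N$ converting the penalty -- is precisely what cancels the $\beta_g/\lambda$ dependence and leaves $c\Omega\T\cdot e\leq1$, i.e.\ $\T\leq\frac{1}{2\Omega e}$ after normalizing $c$ to the Nyquist convention. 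I expect the delicate points to be (a) showing a single $N$ meets all the per-level requirements at once, and (b) propagating everything through bounded (round-off) noise; the rest is classical sampling theory and the elementary algebra of the modulo map.
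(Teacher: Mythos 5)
Your architecture is the same as the paper's: decompose $\gamma[k]=y[k]+\varepsilon_\gamma[k]$ with $\varepsilon_\gamma$ taking values in $2\lambda\mathbb{Z}$, make $\Delta^N\gamma$ smaller than $\lambda$ via a Bernstein-type bound so that $\mathscr{M}_\lambda(\Delta^N y)=\Delta^N\gamma$ (the commutation identity), then recover $\varepsilon_\gamma$ by repeated anti-differencing with the integration constant fixed at each level, and finish with the cardinal series. The gap sits exactly at the step that consumes the hypothesis $T\leq\frac{1}{2\Omega e}$, which you leave as a schematic ``joint optimization of $L$ and $N$'': determining the unknown multiple of $2\lambda$ at each anti-difference level. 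Your proposed bookkeeping --- binomial weights $\binom{L}{0},\dots,\binom{L}{N}$ generated by the $N$-fold summation, with Stirling applied to this penalty being what ``leaves $c\Omega T e\leq 1$'' --- is not how the argument closes, and as stated it does not establish the threshold. In the paper, $e$ enters through the \emph{forward} bound $\|\Delta^N\gamma\|_\infty\leq (T\Omega e)^N\|g\|_\infty$ (Taylor remainder, $\tfrac{(NT)^N}{N!}\leq (Te)^N$ by Stirling, then Bernstein), not through the unwinding. What the unwinding actually needs is the extra factor $2$: since $\|y\|_\infty\leq\lambda$ one only has $\|\Delta^n y\|_\infty\leq 2^n\lambda$, and $T\Omega e\leq\tfrac12$ forces $2^{N-1}\leq\beta_g/\lambda$ for the chosen $N=\lceil(\log\lambda-\log\beta_g)/\log(T\Omega e)\rceil$; hence the fluctuation of $\Delta^{n-2}\varepsilon_\gamma=\Delta^{n-2}\gamma-\Delta^{n-2}y$ between indices $1$ and $J+1$ is at most $2(T\Omega e)^{n-2}\beta_g+2^{n-1}\lambda\leq 3\beta_g$ uniformly in $n\leq N$. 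With the \emph{fixed} (not optimized) window $J=6\beta_g/\lambda$ this traps $\kappa_n$ in an interval of length $\tfrac12$, which contains a unique integer, and rounding each partial sum to $2\lambda\mathbb{Z}$ absorbs round-off; this is what answers your own worry (a) that a single $N$ must serve all levels.

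Two smaller points. Your identifiability aside (kernel of $\Delta^N$ consists of polynomials, boundedness kills all but the constant) is fine but carries no weight in the constructive proof, which is what the theorem requires. And your sharper difference bound via the integral representation, $\|\Delta^N\gamma\|_\infty\leq(T\Omega)^N\beta_g$, would if anything eliminate the need for $e$ in the forward step; so attributing the $e$ to a Stirling-converted back-substitution penalty misplaces the source of the constant. Until you either reproduce the $(T\Omega e)^N$ lemma and the explicit window argument above (or some equivalent quantitative pinning of the per-level constants), the specific sufficiency of $T\leq\frac{1}{2\Omega e}$ is asserted rather than proved.
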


Remarkably, our theorem requires a sampling rate depending on the bandwidth only, independent of the ratio between ADC threshold, $\lambda$, and signal amplitude. This is why we refer to our method as \emph{unlimited sampling}. Our numerical demonstrations in Section~\ref{sec:NE} clearly corroborate that it is possible to recover signals whose amplitude range significantly exceeds the dynamic range of the ADC under consideration, that is, $\max | \fin| \gg \lambda$.

\subsection{Related Problems in Other Fields and Recent Work}
\label{sec:RPmod}

To put our results into context, we briefly compare and contrast topics where the modulo operation arises naturally. A \ab{disparate} survey of the literature is presented below.

\begin{enumerate}[leftmargin =25pt, itemsep = 7pt, label = $\bullet$]

\item {\bf Phase Unwrapping.} The phase unwrapping problem\footnote{We thank Prof.~Laurent Jacques (UC Louvain) and Prof. Alan Oppenheim (MIT) for bringing the topic of phase unwrapping to our notice and clarifying the intricate differences with respect to sampling theory.} \cite{Itoh:1982,Ghiglia:1998,Choi:2007}, which finds widespread applications, specially in imaging,  arises in a number of applications related to imaging (cf.~\cite{Ghiglia:1998} as well as Chapter 3 in \cite{Bhandari:2018}) where one only has access to a sinusoid of varying phase shift and the relevant information is encoded in these shifts, which are assumed to be varying smoothly. Due to the periodicity of the sinusoid, one can only extract this information modulo $2\pi$, so the goal is to recover a smooth phase function that corresponds to these observations. In this sense, the phase unwrapping problem is closely related to the problem studied in this paper, although it arises in quite a different way; the ambiguity is inherent in the sensing problem whereas we deliberately inject this non-linearity to enhance the reconstruction quality. This computational design aspect becomes important as it allows for further modifications such as the introduction of redundancy. \\

Given this similarity it should not come as a surprise that one of the cornerstone methods for phase unwrapping can be seen as a simplified version of the algorithm proposed in this paper. More precisely, when the max-norm of the first-order finite difference of the samples is bounded by $\lambda$ or, \ab{$\left| {\gamma \left[ {k + 1} \right] - \gamma \left[ k \right]} \right| \leqslant \lambda ,\gamma \left[ k \right] = g\left( {kT} \right)$}, one can recover by inverting the first-order finite difference operator on the modulo sequence. This is an established result and is widely known as {\bf Itoh's condition} \cite{Itoh:1982}. Please refer to \fig{fig:PU} for a schematic explanation of Itoh's condition \ab{and \fig{fig:Result_1} for numerical comparison with our approach}. Due to the \ab{\bf redundancy} via oversampling in our sensing setup, we are able to advance this method both in terms of admissible signal amplitudes and stability.

\item {\bf Digital Communications.} In the work dating back to early 1970s, Tomlinson \cite{Tomlinson:1971} and Harashima \cite{Harashima:1972} describe the use of modulo precoding\footnote{We thank Prof.~Robert Gray (Boston University) who shared several historical facts regarding the role of modulo operations in the context of precoding and offering clarifications with reference to his work on modulo sigma-delta modulation that is mentioned in Section \ref{sec:Background}.} in the context of communication over channels with inter-symbol interference. In recent years, modulo operation has also been used in integer-forcing communication \cite{Zhan:2014}. Although the role of such strategies in connection with sampling theory is unclear, there may be potentially interesting links for specific signal structures and reconstruction algorithms for the case of folded samples.  

\item {\bf Rate Distortion Analysis and Prediction Based Recovery.} As early as in 1979, Ericson \& Ramamoorthy \cite{Ericson:1979} considered quantized modulo samples in the context of a source coding scheme they referred to as \emph{Modulo-PCM}. Here, the authors, propose a heuristic decoder based on the Viterbi algorithm for the case of first-order Gauss-Markov processes. In early 90's, Chou \& Gray \cite{Chou:1992} studied the behaviour of quantization noise in the context of modulo sigma-delta modulation. More recently, Boufounos \cite{Boufounos:2012} developed a recovery scheme for modulo quantized measurements that is based on consistent reconstruction. However, all of these approaches are only feasible in limited dimensions and quite costly when the dimension is large. Our approach, in contrast, is stable and efficient and is easily scalable to a large number of samples in a mathematically guaranteed fashion. The approach in \cite{Boufounos:2012} can be made more practicable with the knowledge of additional side-information \cite{Valsesia:2016} via prediction based recovery. Following the initial ideas underlying this work as summarized in \cite{Bhandari:2017}, a modulo sampling based \ab{simulated} hardware implementation and quantization of bandlimited functions was studied by Ordentlich \etal in \cite{Ordentlich2018}, again using additional side information. The authors considered oversampled representation for both single and multiple channel cases. Later, in \cite{Romanov:2019}, Romanov \& Ordentlich studied recovery for non-quantized samples and showed that sampling faster than Nyquist rate is enough under some mild decay conditions. The side information used in \cite{Ordentlich2018} and \cite{Romanov:2019} consists of a certain number of non-modulo samples based on which linear prediction using Chebyshev polynomials \cite{Vaidyanathan:1987} can be employed to extrapolate the bandlimited function. In contrast to all these works, such {\bf side-information is not required} for the method presented in this paper. This is of particular advantage when only a finite number of measurements is available, for instance, in the case of parametric signals \cite{Bhandari2018,Bhandari2018a} as one can not expect that modulo non-linearity will leave any patch, unaffected.  

\item {\bf Modular Exponentiation.} Consider the problem of computing and storing a number $a^b \bmod c$ where $a,b,c$ are all integers. This problem frequently arises in number theory, computer science and cryptography \cite{Takagi:1992}. The literature in this area focuses on $a^b \bmod c$ without computing $a^b$ as the digital storage required by the former is much smaller than the latter. In this spirit, our work is similar to \emph{modular exponentiation} as the ADC threshold $\lambda$ (in analogy to $c$) controls the amount of information/bits required for representing the dynamic range.  

\item {\bf Contemporary Literature (2018 onwards).} During the completion of this manuscript, our first work \cite{Bhandari:2017} was followed up in several papers. Below, we summarize the key results. 

\begin{enumerate}[leftmargin=25pt,label=---]
\itemsep2pt
  \item Rudresh \etal \cite{Rudresh2018} study a wavelet based algorithm for reconstruction of Lipschitz continuous functions in the context of unlimited sensing framework. 
  \item Cucuringu \& Tyagi \cite{Cucuringu2018} investigated a more general setup based on H\"{o}lder continuous functions. They also presented a denoising approach based on a quadratically constrained quadratic program (QCQP) with non-convex constraints. 
  
  \item Compressed sensing of \emph{discrete-time} sparse signals within the unlimited sampling architecture was investigated by Musa \etal in \cite{Musa2018}. The authors developed a generalized approximate message passing approach to reconstruct discrete signals. Further bounds in context of Gaussian matrices were studied in \cite{Shah2018}. 
  
  \item Unlimited sampling of \emph{continuous-time} sparse signals in canonical and Fourier domain, was discussed in our works \cite{Bhandari2018} and \cite{Bhandari2018a}, respectively.  

\item The idea of \emph{one-bit unlimited sampling} was proposed by Graf \etal in \cite{Graf2019}. This approach recovers high dynamic range signals from signed measurements thus adding to the functionality of the sigma-delta based ADCs. 

\item Given modulo measurements of $n$ i.i.d samples drawn from a Gaussian distribution with unknown covariance matrix, Romanov \& Ordentlich studied recovery methods in \cite{Romanov:2019a}. The case of known covariance matrix was covered in the work of Ordentlich \etal \cite{Ordentlich2018}.

\item In \cite{Ji:2019}, the authors extend the idea of unlimited sensing for signals that live on a graph. To this end, the authors propose
an integer programming based algorithm for recovery from modulo samples.

\item Modulo Radon Transform was presented in \cite{Bhandari:2020} and unlimited sampling based \emph{high dynamic range} computed tomography was presented in \cite{Beckmann:2020}.

\item In recent works, \cite{FernandezMenduina:2020,FernandezMenduina:2020a}, the authors presented a multi-channel unlimited sampling approach in the context of sensor array signal processing. 
\end{enumerate}

\end{enumerate}

\subsection{Organization of this Paper}
The paper is organized as follows. In Section \ref{sec:MS}, we discuss ADC architectures that lead to modulo samples. \ab{Section \ref{sec:IUS} is devoted to the study of uniqueness conditions.} Recovery conditions and the reconstruction algorithm are presented in Section \ref{sec:Rec}. Numerical examples that corroborate our theory are discussed in Section \ref{sec:NE}. We conclude this work with future directions in Section \ref{sec:Conc}.

\begin{figure}[!t]
\centering
\includegraphics[width =0.75\columnwidth]{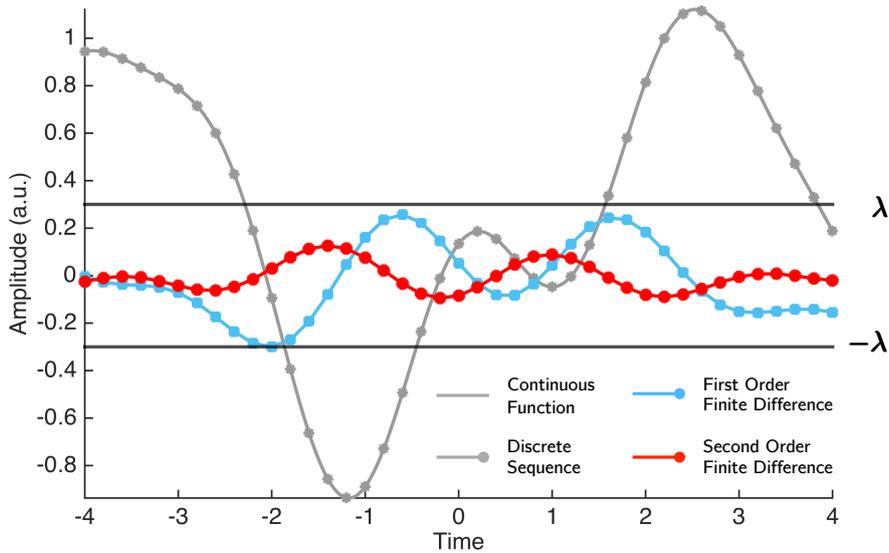}
\caption{Explanation of Itoh's condition \cite{Itoh:1982} in context of phase unwrapping. In the phase unwrapping problem \cite{Ghiglia:1998}, one is given discrete, phase measurements which can only be observed in the range $[0,2\pi]$ and are hence inherently folded. Provided that the max-norm of first-order difference of the samples is bounded by $\lambda=2\pi$, one may recover the original phase by inverting the finite difference. However, this method does not work with higher-order differences due to instabilities and unknown constants involved with the inversion of differences. On the other hand, with increased sampling rate and higher-order differences, one may work with smaller $\lambda$. This is the case with the second-order finite difference (in red). This is the key idea of this paper.}
\label{fig:PU}
 \end{figure}

\subsection{Notation}
We use $\mathbb{N}$, $\mathbb{Z}$, $\mathbb{R}$ and $\mathbb{C}$ to denote the set of natural numbers, integers, reals and complex numbers, respectively. We use $\bigcupdot$ to denote disjoint union of sets. Continuous functions are written as $f\l t \r, t\R$ while their discrete counter-parts are represented as $f\left[k\right],k\Z$. The $L^p$ space equipped with the $p$-norm or $\| \cdot \|_{L^p}$ is the standard Lebesgue space. For instance, $L^1$ and $L^2$ denote the space of absolute and square-integrable functions, respectively. Spaces associated with sequences will be denoted by $\ell^p$. The max-norm ($p\to\infty$) of a function is defined as, ${\left\| f \right\|_{{\infty }}} = \inf \left\{ {{c_0} \geqslant 0:\left| {f\left( t \right)} \right| \leqslant {c_0}} \right\}$ while for sequences, we have, ${\left\| f \right\|_{{\infty }}} = {\max _k}\left| {f\left[ k \right]} \right|$. The $N$--order derivative of a function is denoted by $f^{\l N \r} \l t \r$. The space of $N$--times differentiable, real-valued functions is denoted by $C^N \l \mathbb{R}\r$. For sequences, $$\left( {\Delta f} \right)\left[ k \right] = f\left[ {k + 1} \right] - f\left[ k \right]$$ denotes the finite difference and recursively applying the same yields $N^{\textrm{th}}$ order difference, $\left( {\Delta^N f} \right)\left[ k \right] $. The Fourier transform of a function $f\in L^1$ is denoted by $\widehat{f}\l \omega \r = \int f\l t \r e^{-\jmath \omega t} dt$ with a natural extension to arbitrary distributions via duality. We say $f \in C^N \l \mathbb{R}\r$ is $\Omega$-bandlimited or, 
$$\BL{f} \Leftrightarrow  \widehat f \left( \omega  \right) = {\ind_{\left[ { - {\Omega},{\Omega}} \right]}}\left( \omega  \right)\widehat f \left( \omega  \right)$$ 
where $\ind_{\mathcal{D}}\l t \r$ is the indicator function on domain $\mathcal{D}$. This includes functions with infinite energy such as a pure sine wave. When dealing with finite-energy functions, it is convenient to work with the Paley-Wiener space, $\PW{\Omega} = \mathcal{B}_\Omega\cap {L^2}$. Note that for $\epsilon>0$, one has $\PW{\Omega} \subset \PW{\Omega+\epsilon}$. In the context of numerical computations, $\left\lceil  \cdot  \right\rceil$ and $ \left\lfloor  \cdot  \right\rfloor$ denote the floor and ceiling functions.

\section{\sr{s} and Modulo Samples}

\label{sec:MS}

\subsection{Background on Self-Reset ADCs}

\label{sec:Background}

When reaching the upper or lower saturation voltage, the \sr{s} reset to the respective threshold. This is what allows for capturing voltage variations beyond the conventional saturation limit. This mechanism aptly justifies the name \emph{self-reset} or \emph{folding} ADCs which is mathematically equivalent to using the modulo mapping in \eqref{map}. 

As is often the case, theoretical conceptualization of such ADCs predated its practical implementation. To bring the reader up to pace with the literature, we will quickly review the key references in the area. As early as in the late 1970's, the concept of modulo limiters was used in the context of Tomlinson--Harashima decoders \cite{Tomlinson:1971,Harashima:1972} in communication theory. Chou and Gray \cite{Chou:1992} studied the resulting quantization noise but, neither the physical realization nor their recovery properties were investigated. It was only in the early 2000's that the physical implementation of such ADCs started to develop. One of the earliest references is due to Rhee and Joo \cite{Rhee:2003} where the authors proposed the \sr{} in context of CMOS imagers. Interested readers are also referred to a tutorial article on \emph{folding} ADCs by Kester \cite{Kester:2009} which contains historical references. In a study on quantitative characterization of ADC architectures, Kavusi and Gamal \cite{Kavusi:2004} note that \sr{s} allow for simultaneous enhancement of both dynamic range as well as the signal-to-noise ratio. We remark that the revolution around the \sr{s} is mainly motivated by the goal of achieving improvement on the dynamic range. This is primarily because the dynamic range of natural scenes is typically much larger than what a conventional ADC or an imaging sensor may accommodate. While HDR imaging requires several exposures, \sr{} based imaging is inherently HDR due to folding of amplitudes. An alternative architecture for HDR imaging via modulo operations was recently used in \cite{Zhao:2015}. Beyond consumer photography \cite{Debevec:1997} and autonomous vehicle navigation \cite{Yamada:1998}, improvement on dynamic range is important for scientific and bio-imaging. To this end, Sasagawa \cite{Sasagawa:2016} and Yamaguchi \cite{Yamaguchi:2016} recently developed an implantable \sr{} for functional brain imaging.

\begin{figure}[!t]
\centering
\includegraphics[width =0.8\columnwidth]{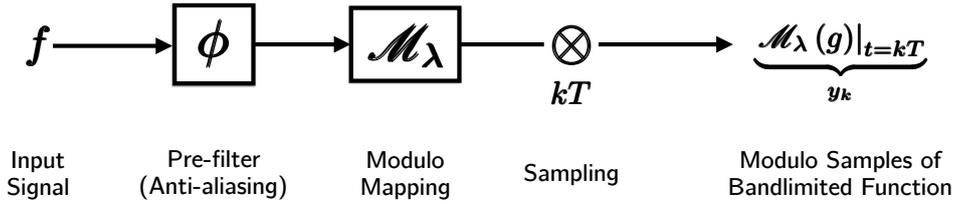}
\caption{Unlimited sampling architecture for obtaining modulo samples.}
\label{fig:blk}
 \end{figure}

\begin{figure*}[!t]
\centering
\includegraphics[width =1\textwidth]{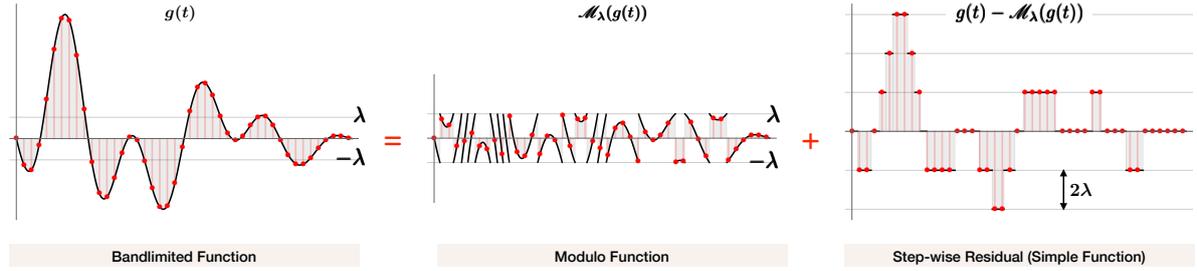}
\caption{Bandlimited functions can be decomposed into a modulo part and corresponding step-wise residual. Same applies to the sampled version (red ink).}
\label{fig:MD}
 \end{figure*}

\subsection{Mathematical Model for Unlimited Sampling}
\label{ssec:MMUS}

\vi{Thanks to the \sr{s} \cite{Rhee:2003,Sasagawa:2016,Yamaguchi:2016}, we can repurpose the resetting capability for obtaining modulo samples without recording the reset counts (cf.~Section \ref{sec:SMA} and \fig{fig:adc} (f)).} The sampling process for obtaining modulo samples of a function is outlined in the block diagram in \fig{fig:blk}. The basic principles are similar to the conventional case except for the modulo mapping. A break down of the key steps is as follows:

\begin{enumerate}[leftmargin=*,label=$\arabic*)$, itemsep = 5pt]

\item We start with a square-integrable function $f\in L^2$ (not-necessarily bandlimited) to be sampled. 

\item Pre-filtering of $f$ with $\phi \in \PW{\Omega} $ results in a low-pass approximation which is given by, 
\begin{align}
\label{gt}
g \in \PW{\Omega} , \quad g\left( t \right) & = \left( {f*\phi } \right)\left( t \right) = \int {f\left( \tau  \right)\phi \left( {t - \tau } \right)dt}.  \notag
\end{align}

Also note that since $g \in \PW{\Omega} $, it admits the standard sampling formula for bandlimited functions, 
\begin{equation}
\label{ST}
0 < \T \leqslant \frac{\pi }{\Omega }, \ \ g\left( t \right) = \sum\limits_{k =  - \infty }^{k =  + \infty } {g\left( {k\T} \right)\operatorname{sinc} \left( {\frac{t}{\T} - k} \right)}, 
\end{equation}
where $\T$ is the sampling rate and $\operatorname{sinc} \left( t \right) = \tfrac{{\sin \left( {\pi t} \right)}}{{\left( {\pi t} \right)}}$ is the ideal low-pass filter.

\item The bandlimited function $g$ is folded in the range $\left[ { - \lambda ,\lambda } \right]$ via non-linear mapping \eqref{map} and results in, 
\begin{equation}
\label{yc}
z\l t \r \DE \MO{g\l t \r}.
\end{equation}

\item The folded function $z\l t \r$ is sampled using impulse-train, {${ \otimes _{k \T }} \DE \sum\limits_{n \in \mathbb{Z}} {\delta \left( {t - k \T }\right)}$}, with sampling rate $\T>0$ yielding uniform samples, 
\begin{equation}
\label{yn}
{y[k]} \DE z\left( {k \T } \right)  = \MO{g\l k\T \r}, \ \ k\Z
\end{equation}
as shown in \fig{fig:blk}.

\item When considering quantization with a budget of $B$ bits per sample, each modulo sample $y[k]$ is rounded to the closest element  in the set 
\[
\EuScript{C}_{B,\lambda} = \left\{ {\left. { \pm \frac{{\left( {2n + 1} \right)}}{{{2^B}}}\lambda } \ \right|n \in \left\{ {0, \ldots ,{2^{B - 1}} - 1} \right\}} \right\}.
\]
Note that this operation induces noise that is not following a random model, but is entry-wise bounded by $2^{-B}$, which is why our guarantee below considers bounded deterministic noise.
\end{enumerate}

\noindent We plot $g\l t \r$, $z\l t \r $ and modulo samples ${\left\{ {{y[k]}} \right\}_{k \in \mathbb{Z}}}$ in \fig{fig:adc}. 

\subsection{The Structure of Discontinuities in Modulo Representation}
Clearly, the unlimited sampling architecture converts a smooth function into a discontinuous one. Recovering the unfolded function $g\l t \r$ from scrambled, low dynamic-range, samples then boils down to patching the discontinuities together. In fact, it turns out that the discontinuities admit a structure which is critical to our cause. Every bandlimited function, continuous or discrete, can be decomposed as a sum of a modulo function and a step-wise residual that we call \emph{simple function}. This is shown in \fig{fig:MD}. We elaborate on this aspect in form of the following proposition. 

\begin{proposition}[Modular Decomposition Property]
\label{prop:mod} Let $\BL{g}$ and $\MO{\cdot}$ be defined in (\ref{map}) where $\lambda$ is a fixed, positive constant. Then, the bandlimited function $g\l t \r$ admits a decomposition
\begin{equation}
\label{eg}
g\l t \r =  z\l t \r +{{\varepsilon _g}\left( t \right)} 
\end{equation}
where $z\l t \r = \MO{g\l t \r}$ and ${\varepsilon _g}$ is a simple function, 
\begin{equation}
\label{eq:SF}
{\varepsilon _g}\left( t \right) = 2\lambda\sum\limits_{m \in {\mathbb{Z}}} {{e[m]}{{\ind}_{{\mathcal{D}_m}}}\left( t \right)}, \ \  e[m] \in \mathbb{Z},  
\end{equation}
where $\bigcupdot_{m\Z} \mathcal{D}_{m} = \mathbb{R}$ is a partition of the real line into intervals $\mathcal{D}_m$.
\label{prp:MD}
\end{proposition}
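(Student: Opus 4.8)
The plan is to first extract in closed form the integer ``quotient'' produced by the modulo map, and then to argue that this quotient, viewed as a function of $t$, is piecewise constant on genuine intervals — the latter being the only place where the bandlimited hypothesis is actually used.

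First I would rewrite the modulo map itself. Unwinding the fractional part $\fe{x} = x - \flr{x}$ in \eqref{map} gives the pointwise identity
\begin{equation*}
\MO{f} = f - 2\lambda\flr{\frac{f}{2\lambda} + \frac{1}{2}}, \qquad f \in \mathbb{R},
\end{equation*}
so that $f - \MO{f} \in 2\lambda\mathbb{Z}$ for every real $f$. Applying this with $f = g(t)$ and setting $q_g(t) \DE \flr{\frac{g(t)}{2\lambda} + \frac{1}{2}} \in \mathbb{Z}$, I obtain $g(t) = z(t) + \varepsilon_g(t)$ with $\varepsilon_g(t) = 2\lambda\, q_g(t)$, which already establishes \eqref{eg} together with the integrality $\varepsilon_g(t)/(2\lambda) \in \mathbb{Z}$. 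It then remains to show that $q_g$ has the structure claimed in \eqref{eq:SF}, i.e. that it is constant on each piece of a partition of $\mathbb{R}$ into intervals.

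For that step I would invoke that $\BL{g}$ forces $g$ to be real-analytic (it extends to an entire function of exponential type, by the Paley--Wiener / Bernstein theorem). If $g$ is constant, $q_g$ is constant and the partition is trivial, $\mathcal{D}_0 = \mathbb{R}$. Otherwise, for each $n\in\mathbb{Z}$ the function $g - (2n-1)\lambda$ is analytic and not identically zero, so its zero set is discrete; since $g$ is bounded on every compact interval only finitely many levels $(2n-1)\lambda$ are attained there, and hence the transition set $S \DE \{t : \tfrac{g(t)}{2\lambda} + \tfrac{1}{2} \in \mathbb{Z}\} = \bigcup_{n\in\mathbb{Z}} g^{-1}(\{(2n-1)\lambda\})$ meets every compact interval in finitely many points. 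Thus $S$ is locally finite, $\mathbb{R}\setminus S$ is a countable disjoint union of open intervals, and on each of these $\tfrac{g(t)}{2\lambda}+\tfrac{1}{2}$ is continuous and never integer-valued, so by the intermediate value theorem it stays strictly between two consecutive integers and $q_g$ is constant there. Distributing the isolated points of $S$ among the adjacent intervals — e.g. declaring the intervals half-open — yields a partition $\bigcupdot_{m\in\mathbb{Z}} \mathcal{D}_m = \mathbb{R}$ with $q_g \equiv e[m] \in \mathbb{Z}$ on $\mathcal{D}_m$, which is exactly \eqref{eq:SF}. The same computation applied to the sequence $\gamma[k] = g(kT)$ gives $\gamma[k] - y[k] \in 2\lambda\mathbb{Z}$, i.e. the discrete (``red ink'') analogue of the decomposition.

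The main obstacle — really the only non-formal point — is the claim that the level crossings of $g$ at odd multiples of $\lambda$ neither accumulate nor fill an interval; without the bandlimited (hence analytic) assumption a merely continuous $g$ could attain a level on a Cantor-type set, and $q_g$ would then fail to be a simple function in the sense of \eqref{eq:SF}. Everything else is the one-line identity for $\MO{\cdot}$ together with the elementary structure of open subsets of $\mathbb{R}$.
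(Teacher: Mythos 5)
Your proof is correct, and its first half coincides exactly with the paper's argument: both unwind \eqref{map} to obtain $\varepsilon_g(t) = g(t) - \MO{g(t)} = 2\lambda\flr{h(t)}$ with $h = g/(2\lambda) + 1/2$, so that $\varepsilon_g$ takes values in $2\lambda\mathbb{Z}$, which gives \eqref{eg}. Where you genuinely diverge is in justifying the interval structure in \eqref{eq:SF}: the paper simply asserts that for ``an arbitrary function $h$'' the floor $\flr{h}$ is a simple function over a partition of $\mathbb{R}$ into intervals, and its proof never uses the hypothesis $\BL{g}$; you instead invoke the Paley--Wiener fact that a bandlimited $g$ is real-analytic, so that the level set $g^{-1}\bigl((2\mathbb{Z}-1)\lambda\bigr)$ is locally finite and $\flr{h}$ is constant on each complementary open interval by the intermediate value theorem. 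This is an improvement in rigor rather than just a stylistic difference: as you observe, for a merely continuous $h$ the sets $\{t:\flr{h(t)}=n\}$ need not decompose into intervals (a level can be attained on a Cantor-type set), so the paper's blanket claim holds only under extra regularity, which your analyticity argument supplies and which is exactly where the bandlimited assumption earns its keep. One small point to tighten: at a point $t_0$ where $g$ touches a level $(2n-1)\lambda$ from below without crossing it (a local maximum on the level), one has $\flr{h(t_0)} = n$ while $\flr{h} = n-1$ on both sides, so such a $t_0$ cannot be absorbed into an adjacent half-open interval as you suggest; simply let these isolated points be their own (degenerate) pieces of the partition, which affects nothing downstream since they form a locally finite, measure-zero set.
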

\begin{proof}
Since $z\l t \r  = \MO{g\l t \r}$, by definition, we write, 
\begin{align}
 {\varepsilon _g}\left( t \right)   \EQc{eg}   g\l t \r - \MO{g\l t \r} & \DL{a} 2\lambda \left(h\l t \r- \ft{h\l t \r} \right) \notag \\
& \DL{b} 2\lambda \flr{h\l t \r}, 
\label{egh}
\end{align}
where (a) is due to $h\DE\l g/2\lambda\r + 1/2 $ and in (b), we use $h = \ft{h} + \flr{h}$. Since, for an arbitrary function $h$, $\flr{h}$ has the form, 
\begin{equation}
\label{sf}
\flr{h\l t \r} = \sum\limits_{\ell \in \mathbb{Z}} {{e[\ell]}{{\ind}_{{\mathcal{D}_\ell}}}\left(t \right)}, \quad e[\ell] \in \mathbb{Z}, \ \ \bigcupdot\limits_{\ell\Z} \mathcal{D}_{\ell} = \mathbb{R}
\end{equation}
{we obtain} the desired result. 
\end{proof}

This proposition is the stepping stone towards the recovery algorithm. It allows us to tackle two problems at once: 
\begin{enumerate}[leftmargin=25pt,label= $  \arabic*)$,itemsep=5pt]
  \item According to \eqref{eg}, if $\varepsilon_g$ is known, we can recover $g$ from $z$. In this paper we develop a method which allows for inferring $\varepsilon_g$ from $z$.  
  \item According to \eqref{sf}, the fact that the amplitudes of $\VO{g}$ can only be integer multiples of $2\lambda$ allows us to enforce a \emph{consistency} constraint\footnote{By the consistency constraint, we mean that in presence of noise or rounding errors, we may force the amplitudes of $\varepsilon_g$ to be on the grid of $2\lambda\mathbb{Z}$. This is implemented using \eqref{eq:RD} and step $4(b)$ in Algorithm \ref{alg:ModSamp}. For numerical demonstration see Section \ref{sec:VS}.} which in turn leads to a robust recovery algorithm.
\end{enumerate}
We emphasize that the decomposition property of Proposition \ref{prop:mod} offers a general purpose utility whose potential benefits are typically not exploited. For instance, phase unwrapping \cite{Itoh:1982} may directly benefit from this observation. In phase unwrapping literature, one recovers $g(nT)$ from its first-order finite difference using the anti-difference. Clearly, an arbitrary smooth function has far more degrees of freedom than \eqref{sf} and hence, from robustness perspective, it is more effective to enforce the simple function structure.

\begin{figure*}[!t]
\centering
\includegraphics[width =1\textwidth]{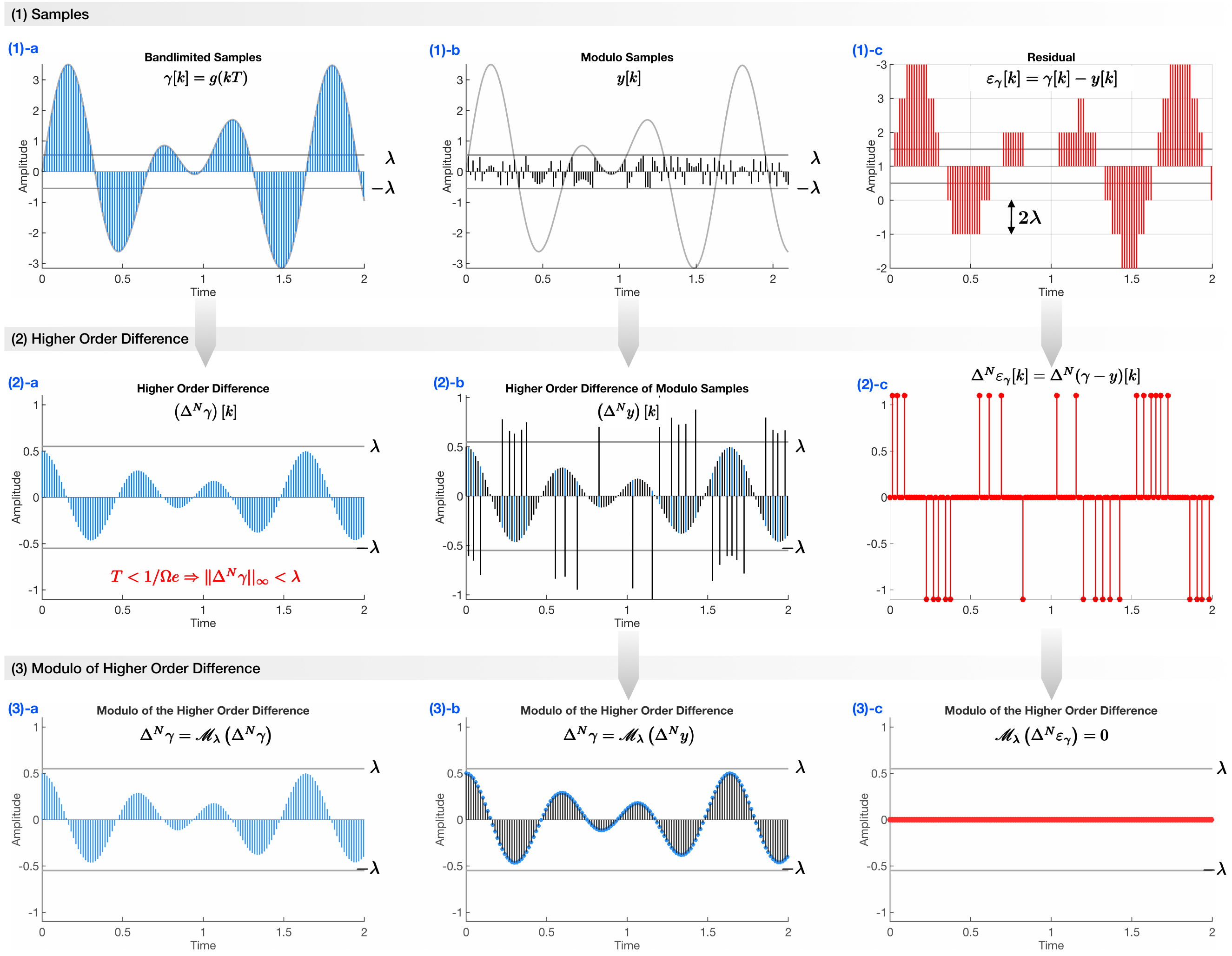}
\caption{Overview of the main idea behind recovering a bandlimited function from modulo samples. Given the sequence of modulo samples, our basic strategy will be to apply a higher order finite difference operators. We will be exploiting that such operators commute with the modulo operation. So after applying the amplitude folding to the resulting sequence, one obtains the same output as if one had started with $\gamma[k] = g(k\T)$ instead of $y[k]$. That in turn will allow for recovery if the higher order finite differences of the $\gamma[k]$'s are so small that the amplitude folding has no effect. Sub-figure (1) shows modulo decomposition, specified by \eqref{eg}, or $\gamma[k] = y[k] + \VO{\gamma}[k]$ . Sub-figure (2) shows pointwise difference of (1)-a, the bandlimited function, pointwise difference of (1)-b, that is, folded version of bandlimited function in (1)-a and pointwise difference of the residual in (1)-c. Sub-figure (3) shows the modulo of plots in (2)-a, (2)-b and (2)-c, respectively. When the sampling density meets certain criterion, the differences are always bounded by the threshold $\lambda$. More precisely, $\T<1/\Omega e \Rightarrow\|{\Delta ^N}\gamma |{|_\infty } < \lambda$ as shown in Section \ref{subsec:RHODMS}. Hence modulo operation has no effect implying that plots (2)-a and (3)-a exactly coincide. As shown in Sub-figure (3)-b, the plots in (3)-a and (3)-b exactly coincide (also see \eqref{eq:modeq}). Hence, starting with the modulo measurements $y[k]$ in (1)-b, it is possible to recover the higher differences of (1)-a, that is, $\ND{\gamma}$ which is shown in (2)-a. To recover $\gamma[k]$ from $\ND{\gamma}[k]$, we first estimate $\ND{\VO{\gamma}} = \MO{\ND{y}} - \ND{y} $. Since $\VO{\gamma}$ only takes amplitudes on the grid of $2\lambda\mathbb{Z}$, we exploit this restriction on amplitudes to recover $\VO{\gamma}$ in (1)-c. There on, adding $\VO{\gamma}[k]$ in (1)-c and the modulo samples $y[k]$ in (1)-b results in the bandlimited samples in (1)-a.}
\label{fig:Main}
 \end{figure*}

\section{On Identifiability in Unlimited Sampling }

\label{sec:IUS}

In this section, we \ab{present} identifiability conditions associated with the unlimited sampling strategy. Said differently, we answer the question of what sampling density leads to unique characterization of a bandlimited function in terms of its modulo samples? The key result of this section is that any sampling rate faster than critical sampling, that is, $0<\T<\pi/\Omega$ allows for a one-to-one mapping between a bandlimited function, $f\in \PW{\Omega}$ and modulo samples ${y[k]}  = \MO{g\l k\T \r}$. In what follows, without loss of generality, we will normalize the bandwidth such that $\Omega = \pi$ and the critical sampling rate is $\T = 1$. In working with oversampled representations, we will denote the sampling rate by, $\TE=\frac{\pi}{\pi+\epsilon}, \epsilon>0$ implying that $0<\TE<1$.

\subsection{Uniqueness of Modulo Representation}

The following Lemma is used for the proof of injectivity conditions.  

\begin{lemma} \label{lem} Assume that $f\in\PW{\pi} $, $\epsilon>0$ and $\mathbb{L} \subset \mathbb{Z}$ is some finite set. Then $f$ is uniquely characterized by its samples in $ \TE\cdot(\mathbb{Z}\setminus \mathbb{L})$ where $\TE=\frac{\pi}{\pi+\epsilon}, \epsilon>0$ is the sampling rate.
\end{lemma}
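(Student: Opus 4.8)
The plan is to show that removing finitely many samples from an oversampled grid does not destroy injectivity, because the "lost" information can be encoded in a finite-dimensional object that is annihilated by the extra oversampling degrees of freedom. Concretely, suppose $f_1, f_2 \in \PW{\pi}$ agree on $\TE \cdot (\mathbb{Z} \setminus \mathbb{L})$; set $h = f_1 - f_2 \in \PW{\pi} \subset \PW{\pi + \epsilon}$. Then $h$ vanishes on $\TE \cdot (\mathbb{Z} \setminus \mathbb{L})$, and I want to conclude $h \equiv 0$.

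First I would invoke the standard sampling theorem at the oversampled rate: since $h \in \PW{\pi+\epsilon}$ and $\TE = \frac{\pi}{\pi+\epsilon} < 1$ is a valid (sub-Nyquist-spacing) sampling rate for bandwidth $\pi + \epsilon$, $h$ is completely determined by $\{h(k\TE)\}_{k \in \mathbb{Z}}$ via the reconstruction formula \eqref{ST} (with $\Omega$ replaced by $\pi + \epsilon$). So $h$ is the zero function if and only if all its samples $h(k\TE)$, $k \in \mathbb{Z}$, vanish. By hypothesis these already vanish for $k \in \mathbb{Z}\setminus\mathbb{L}$, so it remains to show $h(k\TE) = 0$ for the finitely many $k \in \mathbb{L}$ as well.

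The key step is the following: because $h$ actually lies in the smaller space $\PW{\pi}$, not just $\PW{\pi+\epsilon}$, its oversampled samples are not free — they satisfy linear constraints coming from the fact that $\widehat h$ is supported in $[-\pi,\pi]$ rather than $[-\pi-\epsilon,\pi+\epsilon]$. One clean way to exploit this: expand $h$ in the orthonormal basis $\{\operatorname{sinc}(\cdot - n)\}_{n}$ adapted to $\PW{\pi}$, so $h(t) = \sum_n c_n \operatorname{sinc}(t-n)$ with $(c_n) = (h(n)) \in \ell^2$, and observe that $h(k\TE) = \sum_n c_n \operatorname{sinc}(k\TE - n)$; alternatively, work on the Fourier side with $\widehat h \in L^2[-\pi,\pi]$ and the relation $h(k\TE) = \frac{1}{2\pi}\int_{-\pi}^{\pi} \widehat h(\omega) e^{\jmath k \TE \omega}\, d\omega$. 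The point is that the map from $h \in \PW{\pi}$ to the finite vector $(h(k\TE))_{k \in \mathbb{L}}$ factors through an operator of finite rank, yet we have infinitely many zero conditions $h(k\TE)=0$, $k \notin \mathbb{L}$, available; a dimension-counting / analyticity argument then forces $h(k\TE) = 0$ on $\mathbb{L}$ too. A slicker route that avoids dimension counting: $h(\cdot)$ extends to an entire function of exponential type $\pi$; if $h \not\equiv 0$, write $h(t) = p(t) h_0(t)$ where I peel off the (finitely many, by $L^2$-membership in a Paley–Wiener space the zeros have density $\le 1/\pi$ — actually I'd rather argue directly) — cleaner still: the functions $\{\operatorname{sinc}(\TE^{-1}t - k) : k \in \mathbb{Z}\}$ form a Riesz basis for $\PW{\pi+\epsilon} \supset \PW{\pi}$, and removing finitely many elements $k \in \mathbb{L}$ from a Riesz basis leaves a system whose closed span has finite codimension; intersecting that span with the subspace $\PW{\pi}$, which is "deficient" by an infinite-dimensional amount inside $\PW{\pi+\epsilon}$, still gives all of $\PW{\pi}$. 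Formally: $\PW{\pi}$ is contained in the closed span of $\{\operatorname{sinc}(\TE^{-1}\cdot - k): k \in \mathbb{Z}\setminus\mathbb{L}\}$ because the orthogonal complement of that span in $\PW{\pi+\epsilon}$ is spanned by the finitely many biorthogonal (dual Riesz) functions indexed by $\mathbb{L}$, each of which has Fourier transform with a component outside $[-\pi,\pi]$ — hence none is orthogonal to all of...

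The main obstacle I anticipate is making precise \emph{why} the dual/biorthogonal functions associated to the removed indices $\mathbb{L}$ cannot be orthogonal to $\PW{\pi}$, i.e., ruling out a nonzero $h \in \PW{\pi}$ with $h(k\TE) = 0$ for all $k \notin \mathbb{L}$. I expect the honest argument to be the analyticity one: $h \in \PW{\pi}$ is entire of exponential type $\le \pi$, and vanishing on the set $\TE \mathbb{Z} \setminus \TE\mathbb{L}$, which has uniform density $1/\TE = \frac{\pi+\epsilon}{\pi} > 1$ (strictly greater than the type divided by $\pi$), so by a Paley–Wiener / Carlson-type uniqueness theorem for functions of exponential type — removing finitely many points from a set of density exceeding the critical density $1/\pi$ still leaves density $> 1/\pi$ — $h$ must be identically zero. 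So the skeleton is: (i) reduce to showing the difference $h$ is zero; (ii) note $h$ is entire of exponential type $\pi$ with square-integrable restriction to $\mathbb{R}$; (iii) $h$ vanishes on a uniformly discrete set of density $> 1/\pi$; (iv) conclude $h \equiv 0$ by the classical sampling/uniqueness theorem for Paley–Wiener functions (this is essentially a Landau-density statement, or elementarily the Nyquist sampling theorem applied after noting finitely many missing samples can be recovered from the rest via the redundancy). I would lean on step (iv) as a citable classical fact rather than reprove it.
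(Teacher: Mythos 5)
Your final skeleton (i)--(iv) is correct, but it takes a genuinely different route from the one the paper relies on. The paper itself does not spell out a proof of Lemma~\ref{lem}; it defers to the cited SampTA reference, where the argument is elementary and self-contained: for $h=f_1-f_2\in\PW{\pi}$ vanishing on $\TE\cdot(\mathbb{Z}\setminus\mathbb{L})$, one multiplies by a finite product of the form $q(t)=\prod_{l\in\mathbb{L}}\sin\bigl(\delta\,(t-\TE l)\bigr)$ with $|\mathbb{L}|\,\delta<\epsilon$; then $hq$ is square-integrable (since $q$ is bounded), bandlimited to $\pi+|\mathbb{L}|\delta<\pi+\epsilon$, and vanishes on the \emph{whole} grid $\TE\mathbb{Z}$ (on $\mathbb{Z}\setminus\mathbb{L}$ because $h$ does, on $\mathbb{L}$ because the corresponding sine factor does), so Shannon's theorem at the now strictly oversampled rate $\TE$ forces $hq\equiv 0$, hence $h\equiv 0$ by continuity since the zeros of $q$ are isolated. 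Your route instead views $h$ as an entire function of exponential type $\pi$ vanishing on a separated set of lower uniform density $1/\TE=(\pi+\epsilon)/\pi>1$ (unchanged by deleting finitely many points) and appeals to a density-type uniqueness theorem; that is a valid argument, but the citable fact is Beurling's sufficiency theorem (lower uniform density $>1$ makes a separated set a set of uniqueness, indeed stable sampling, for the Bernstein space, which contains $\PW{\pi}$), not Landau (whose density conditions are necessary rather than sufficient) and not Carlson (which requires type strictly less than $\pi$). You should also discard the intermediate sketches: the finite-rank/dimension-counting remark is a non sequitur as stated, the Riesz-basis/biorthogonal argument is left unfinished, and the parenthetical claim that ``finitely many missing samples can be recovered from the rest via the redundancy'' is circular---it restates the lemma. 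In exchange, your approach is more general (it applies verbatim to any separated sampling set of supercritical density, not just an arithmetic grid with finitely many deletions), whereas the paper's cited argument buys self-containedness, using nothing beyond the classical sampling theorem.
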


\ab{For proof of this Lemma, we refer the reader to \cite{Bhandari:2019}.} Lemma~\ref{lem} is used to prove that oversampling uniquely determines modulo samples. \ab{The formal statement is as follows.}


\begin{theorem}[Injectivity Theorem for Unlimited Sampling]\label{thm:IUS} Any $f\in \PW{\pi}$ is uniquely determined by its modulo samples on the grid ${\left\{ {{t_n} = n{\TE }} \right\}_{n \in \mathbb{Z}}}$ with $\epsilon>0$.
\end{theorem}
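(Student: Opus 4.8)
The plan is to show that if two functions $f_1, f_2 \in \PW{\pi}$ produce the same modulo samples on the grid $\{n\TE\}_{n\in\mathbb{Z}}$, then $f_1 = f_2$. Set $g_i = f_i$ (no pre-filtering is needed here since $f_i$ are already bandlimited; equivalently take $\phi$ to be an ideal low-pass) and suppose $\MO{g_1(n\TE)} = \MO{g_2(n\TE)}$ for all $n$. By the definition of the modulo map in \eqref{map}, this means $g_1(n\TE) - g_2(n\TE) \in 2\lambda\mathbb{Z}$ for every $n\in\mathbb{Z}$. So the function $d = g_1 - g_2 \in \PW{\pi}$ takes values in the discrete set $2\lambda\mathbb{Z}$ on the entire sampling grid.

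The key structural step is to argue that a finite-energy $\pi$-bandlimited function whose samples at rate $\TE < 1$ all lie in $2\lambda\mathbb{Z}$ must vanish identically. First, since $d \in \PW{\pi} \subset L^2$, the sequence $\{d(n\TE)\}_n$ is in $\ell^2$ (this follows from $\TE < 1$ oversampling and the standard frame/Bessel bound for bandlimited sampling), hence $d(n\TE) \to 0$ as $|n| \to \infty$. Because the only element of $2\lambda\mathbb{Z}$ in a neighbourhood of $0$ is $0$ itself, there is a finite set $\mathbb{L} \subset \mathbb{Z}$ such that $d(n\TE) = 0$ for all $n \in \mathbb{Z}\setminus\mathbb{L}$. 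Now invoke Lemma~\ref{lem}: $d \in \PW{\pi}$ is uniquely characterized by its samples on $\TE\cdot(\mathbb{Z}\setminus\mathbb{L})$, and since the zero function also has all-zero samples on that thinned grid, uniqueness forces $d \equiv 0$, i.e. $g_1 = g_2$ and therefore $f_1 = f_2$.

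Concretely I would carry out the steps in this order: (i) translate equality of modulo samples into $g_1(n\TE) - g_2(n\TE) \in 2\lambda\mathbb{Z}$ using \eqref{map}; (ii) record that $d = g_1 - g_2 \in \PW{\pi}$ and that its samples at the oversampled rate $\TE$ form an $\ell^2$ sequence, hence decay to zero; (iii) combine decay with the discreteness of $2\lambda\mathbb{Z}$ to conclude $d$ vanishes on the complement of a finite index set $\mathbb{L}$; (iv) apply Lemma~\ref{lem} with this $\mathbb{L}$ to conclude $d \equiv 0$.

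The main obstacle is step (ii): making rigorous that the samples $\{d(n\TE)\}_n$ lie in $\ell^2$ (and in particular decay) for a general element of $\PW{\pi}$ sampled strictly above the Nyquist rate. This is where the oversampling hypothesis $\epsilon > 0$ is genuinely used — at exactly the critical rate, a nonzero $\PW{\pi}$ function can have non-decaying (merely $\ell^2$-in-a-borderline-sense) samples, and more importantly one loses the slack that lets Lemma~\ref{lem} tolerate removal of finitely many samples. The cleanest justification is that for $\TE < 1$ the sampling family $\{\operatorname{sinc}(\cdot/\TE - n)\}_n$ is a frame (indeed a Riesz-like system after normalization) for $\PW{\pi}$, giving $\sum_n |d(n\TE)|^2 \leq C\,\|d\|_{L^2}^2 < \infty$; once this Bessel-type bound is in hand, decay of the samples and the rest of the argument are routine. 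Everything else — the modulo-to-lattice translation and the final appeal to Lemma~\ref{lem} — is essentially bookkeeping.
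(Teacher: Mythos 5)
Your argument is correct and is essentially the paper's own route (the proof the paper defers to \cite{Bhandari:2019}): translate equal modulo samples into a difference $d\in\PW{\pi}$ with samples in $2\lambda\mathbb{Z}$, use the Bessel/Parseval bound for oversampled bandlimited functions to force those samples to decay and hence vanish off a finite set $\mathbb{L}$, and conclude $d\equiv 0$ via Lemma~\ref{lem}. One minor aside: your claim that at the critical rate the samples need not decay is off (Parseval still gives $\ell^2$ samples at $\T=1$); the oversampling $\epsilon>0$ is really needed only for Lemma~\ref{lem}'s tolerance of finitely many deleted samples, as your own parenthetical already notes, so this does not affect the proof.
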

%
\ab{For proof of this Theorem, we refer the reader to \cite{Bhandari:2019}. A subsequent alternative form of proof can be found in \cite{Romanov:2019}.} 

\section{Recovery Conditions and a Reconstruction Algorithm}
\label{sec:Rec}
The result of the previous section shows that any sampling rate faster than the Nyquist rate uniquely characterizes a bandlimited function in terms of its modulo samples. However, this does not yield a constructive algorithm. This section is concerned with the development of a constructive algorithm that recovers a bandlimited function from its modulo samples as defined in \eqref{yn}. Our algorithm is accompanied with a recovery guarantee. Akin to Shannon's sampling theorem, our recovery guarantee purely depends on the signal bandwidth.

\subsection{Overview of the Recovery Scheme}	

Our basic strategy for recovering functions from modulo samples is schematically explained in \fig{fig:Main}. The key observation at the heart of our approach is that finite differences commute with the modulo operation in a certain sense. With the first order difference given by, $\left( {\Delta y} \right)\left[ k \right] \DE y\left[ {k + 1} \right] - y\left[ k \right]$, the $N^{\textrm{th}}$ order difference can be obtained by recursive application of the finite-difference operator, ${\Delta ^N}y= {\Delta ^{N - 1}}\left( {\Delta y}\right)$. Starting with modulo samples $y[k]$ (cf. \fig{fig:Main}-(1)-b), its higher order differences are given by $\left( {{\Delta ^N}y} \right)\left[ k \right]$ (cf. \fig{fig:Main}-(2)-b). Under appropriate conditions, as shown in \fig{fig:Main}-(3)-b, applying the modulo operation on $\left( {{\Delta ^N}y} \right)\left[ k \right]$ reduces to $\left( {{\Delta ^N}g} \right)\left[ k \right]$ and we obtain the equivalence $\MO{ {{\Delta ^N}y}}\left[ k \right] = \left( {{\Delta ^N}g} \right)\left[ k \right]$. Equivalently, the residual function is annihilated that is, $\MO{\Delta^N \varepsilon_g} = 0$. To eventually recover $g\left[ k \right]$ from $\MO{ {{\Delta ^N}y}}\left[ k \right] = \left( {{\Delta ^N}g} \right)\left[ k \right]$, we will work towards recovery of $\varepsilon_g$ which can be robustly estimated as it takes amplitudes on the grid of $2\lambda\mathbb{Z}$. To do so, we compute, $\left( {{\Delta ^N}y} \right) - \MO{\Delta ^N y} = \Delta ^N\varepsilon _g.$ For $N=1$, this is shown in \fig{fig:Main}-(2)-c. To recover $\varepsilon _g$ from $\DN{N}{\VO{g}}$, we will use the anti-difference operator defined as, 
\begin{equation}
\label{eq:AD}
\S:{\left\{ {s\left[ k \right]} \right\}_{k \in {\mathbb{Z}^ + }}} \to \sum\limits_{m = 1}^k \ab{s\left[ m \right]}. 
\end{equation}
followed by rounding of $\S\l \DN{N}{\VO{g}}\r$ to the nearest multiple of $2\lambda\mathbb{Z}$ which adds to the stability. Although polynomials are in the kernel of the anti-difference operator, we will develop an approach that allows us to estimate $\varepsilon _g$ up to an unknown constant. Finally, having estimated the residual function $\widetilde\varepsilon _g$ up to an integer multiple of $2\lambda$, we obtain the bandlimited samples $g\left[ k \right]$ by using \eqref{eg}. The continuous-time function is obtained by low-pass filtering the samples using \eqref{ST}.

\subsection{Towards a Recovery Guarantee for Unlimited Sampling}
Our scheme relies on conditions under which the higher order differences of the bandlimited samples are small enough in amplitude so that the modulo non-linearity has no effect on the same. To analyze how much can one shrink the amplitudes, we will begin our analysis with a bound that relates max-norm of higher order differences with its continuous-time counterpart, the derivative. Below, we summarize some well-known consequences of Taylor's theorem and Bern\v{s}te\u{\i}n's inequality in form of the following lemma. 

\begin{lemma}
\label{Lemma}
For any $g\in C^m(\mathbb{R}) \cap L^\infty(\mathbb{R}) $ and $N\in\mathbb{N}$, its samples $\gamma [k]\DE g(k\T)$ satisfy 
\begin{equation}
\label{lem:1}
\|\Delta^N \gamma\|_\infty \leq {\left( {\T e} \right)^N} \| g^{(N)}\|_\infty.
\end{equation}
Furthermore, whenever $g\l t \r$ is an $\Omega$--bandlimited function, its samples $\gamma [k]\DE g\l k\T\r, \T\leq\pi/\Omega$ satisfy, 
\begin{equation}
\label{lem:1b}
\|\Delta^N \gamma\|_\infty \leq {\left( {\T \Omega e} \right)^N} \| g\|_\infty.
\end{equation}
\end{lemma}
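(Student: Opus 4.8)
The plan is to establish \eqref{lem:1} first via Taylor's theorem and then derive \eqref{lem:1b} as a corollary by invoking Bern\v{s}te\u{\i}n's inequality. For the first bound, I would proceed by induction on $N$, but the cleaner route is to use the well-known identity expressing the $N$-th finite difference as an $N$-fold integral of the $N$-th derivative. Concretely, for $g \in C^N(\mathbb{R})$ one has the Hermite--Genocchi / repeated-integration formula
\begin{equation*}
\left( \Delta^N \gamma \right)[k] = \int_0^{\T} \cdots \int_0^{\T} g^{(N)}\!\left( k\T + u_1 + \cdots + u_N \right) \, du_1 \cdots du_N,
\end{equation*}
which can itself be proved by induction, each application of $\Delta$ introducing one more integral via the fundamental theorem of calculus. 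Taking absolute values inside the integral and bounding $|g^{(N)}|$ by $\|g^{(N)}\|_\infty$ gives $\|\Delta^N \gamma\|_\infty \leq \T^N \|g^{(N)}\|_\infty$. This is already a bound of the claimed form but with constant $\T^N$ rather than $(\T e)^N$, i.e.\ strictly stronger; since $e>1$ the stated inequality follows a fortiori. (I suspect the factor $e^N$ in the paper is an artifact of a cruder argument, e.g.\ bounding binomial-coefficient sums $\sum_j \binom{N}{j} = 2^N$ or using a Stirling-type estimate, and is retained only because it matches the later appearance of $e$ in the sampling rate; in any case establishing the stronger $\T^N$ bound suffices.)

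For the second bound \eqref{lem:1b}, I would specialize to $\Omega$-bandlimited $g$ and combine \eqref{lem:1} with Bern\v{s}te\u{\i}n's inequality. Recall Bern\v{s}te\u{\i}n's inequality: if $g$ is $\Omega$-bandlimited and bounded, then $g$ is infinitely differentiable and $\|g^{(N)}\|_\infty \leq \Omega^N \|g\|_\infty$ for every $N \in \mathbb{N}$ (this is standard for $L^\infty$ bandlimited functions; one proof writes $g^{(N)}$ as convolution of $g$ with the $N$-th derivative of a suitable reproducing kernel whose $L^1$ norm is controlled, or uses the entire-function-of-exponential-type characterization and the classical Bern\v{s}te\u{\i}n theorem). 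Substituting this into \eqref{lem:1} yields
\begin{equation*}
\|\Delta^N \gamma\|_\infty \leq (\T e)^N \|g^{(N)}\|_\infty \leq (\T e)^N \Omega^N \|g\|_\infty = (\T \Omega e)^N \|g\|_\infty,
\end{equation*}
which is exactly \eqref{lem:1b}. The hypothesis $\T \leq \pi/\Omega$ is not actually needed for the inequality itself — it is only the regime in which the samples determine $g$ — but I would keep it as stated since that is the operating range.

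I expect the main obstacle, such as it is, to be purely expository: deciding how much detail to give for the two classical ingredients. The repeated-integral representation of finite differences and Bern\v{s}te\u{\i}n's inequality are both textbook facts, so the "proof" is really just a matter of citing or briefly sketching them and assembling the two estimates. The one genuine subtlety worth a sentence is the passage from $C^m$ with $m \geq N$ (needed to apply Taylor/repeated integration $N$ times) to the bandlimited case, where $C^\infty$ smoothness and all derivative bounds come for free from Bern\v{s}te\u{\i}n — so \eqref{lem:1b} has no differentiability hypothesis, unlike \eqref{lem:1}. Beyond that there is no real difficulty; the lemma is a packaging of standard results in the precise form the recovery analysis needs.
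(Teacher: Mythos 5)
Your proof is correct, but it takes a genuinely different route for the first inequality than the paper does. The paper expands $g$ via Taylor's theorem with Lagrange remainder around the midpoint $\tau=\bigl(k+\tfrac{N}{2}\bigr)\T$ of the sample window, uses the fact that $\Delta^N$ annihilates polynomials of degree $N-1$ so that only the remainder survives, bounds $\|\Delta^N R_{N-1}\|_\infty\leq 2^N\|R_{N-1}\|_\infty$, and then invokes Stirling's approximation $N!\sim\sqrt{2\pi N}\,(N/e)^N$ --- which is exactly where the factor $e^N$ enters. You instead use the exact $N$-fold integral representation $\left( \Delta^N \gamma \right)[k]=\int_0^{\T}\!\cdots\!\int_0^{\T} g^{(N)}\bigl(k\T+u_1+\cdots+u_N\bigr)\,du_1\cdots du_N$, which yields the sharper bound $\T^N\|g^{(N)}\|_\infty$ directly, with \eqref{lem:1} following a fortiori since $e>1$; your diagnosis that the $e^N$ is an artifact of the cruder Taylor--Stirling bookkeeping is accurate. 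The second step is identical in both arguments: Bern\v{s}te\u{\i}n's inequality $\|g^{(N)}\|_\infty\leq\Omega^N\|g\|_\infty$ (the paper cites Nikolskii) converts \eqref{lem:1} into \eqref{lem:1b}, and your remark that $\T\leq\pi/\Omega$ plays no role in the inequality itself is also correct. What each approach buys: the paper's route produces precisely the constant $(\T\Omega e)^N$ that is threaded through the choice of $N$ and the sampling condition $\T\leq(2\Omega e)^{-1}$ in Theorems \ref{thm:UST} and \ref{thm:USTN}, whereas your route shows the constant $e$ is not intrinsic, so the same downstream analysis could in principle be run with $(\T\Omega)^N$ and a correspondingly milder oversampling requirement; the only hypothesis to keep explicit is $m\geq N$ (needed for either argument), which you already flag.
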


\begin{proof} Using Taylor's theorem, we can write, 
\begin{equation}
\label{eq:TSE}
g\left( t \right) = {P_{N - 1}}\left( t \right) + {R_{N - 1}}\left( t \right)
\end{equation}
where ${P_{N-1}}\left( t \right)$ is the Taylor polynomial of degree $N-1$ around $\tau$, 
$${P_{N - 1}}\left( t \right) = \sum\limits_{n = 0}^{N - 1} {\frac{{{g^{\left( {n - 1} \right)}}\left( \tau \right)}}{{\left( {n - 1} \right)!}}{{\left( {t - \tau } \right)}^{\left( {n - 1} \right)}}}$$
and the remainder of the Taylor series in \emph{Lagrange form} is given by, $${R_{N - 1}}\left( t \right) = \frac{{{g^{\left( N \right)}}\left( z \right)}}{{N!}}{\left( {t - \tau } \right)^N}$$ where $z$ is some number between $\tau$ and $t$. Given a sequence of samples $\gamma[k] = g\l k\T\r$, its higher order differences $\left( {{\Delta ^N}\gamma } \right)\left[ k \right]$ use $N$ contiguous values of $g\l t_n \r$ where, $$t_n = \l k+n\r \T \in \left[ {k \T,\left( {n + k} \right)\T} \right], \quad n = 0, \ldots ,N.$$
By letting $\tau$ to be the center of the above and hence $\tau = \l k+\frac{N}{2}\r \T$ and sampling \eqref{eq:TSE} on both sides, we can now write $$\gamma \left[ k \right] = {P_{N - 1}}\left( k\T \right) + {R_{N - 1}}\left( k\T \right)$$ and consequently,
\begin{equation}
\label{eq:b1}
\left( {{\Delta ^N}\gamma } \right)\left[ k \right] = \left( {{\Delta ^N}{R_{N - 1}}} \right)\left( k \T \right)
\end{equation}
because $\Delta^N$ annihilates polynomials of degree $N-1$. Noting that ${\left\| {{\Delta ^N}{R_N}} \right\|_\infty } \leqslant {2^N}{\left\| {{R_N}} \right\|_\infty }$ and using \eqref{eq:b1}, we have, 
\begin{align*}
  {\left\| {{\Delta ^N}\gamma } \right\|_{{\infty }}}   & \leqslant 
 {{2^N}\frac{{{{\left\| {{g^{(N)}}} \right\|}_\infty }}}{{N!}}{{\left( {\frac{N}{2}{\T}} \right)}^N}}   \\
& \leqslant \frac{{{{\left( {N\T} \right)}^N}}}{{N!}}\|{g^{\left( N \right)}}|{|_\infty }\\  
 &\leqslant \frac{{{{\left( {\T e} \right)}^N}}}{{\sqrt {2\pi N} }}\|{g^{\left( N \right)}}|{|_\infty } \hfill   \leqslant \left( {\T e} \right)^N\|{g^{\left( N \right)}}|{|_\infty }
\end{align*}
where the last but one inequality is obtained by applying the Stirling's approximation, that is, $N!\sim\sqrt {2\pi N} {\left( {\frac{N}{e}} \right)^N}$ and this proves \eqref{lem:1}. 

Next, we use the well known \emph{Bern\v{s}te\u{\i}n's inequality} to bound the right hand side of \eqref{lem:1}. For $\Omega$--bandlimited functions $g\in\PW{\Omega}$, the Bern\v{s}te\u{\i}n's inequality (cf.\ pg.\ 116, \cite{Nikolskii:1975}) asserts that, 
\[
\|{g^{\left( N \right)}}|{|_p } \leqslant {\Omega ^N}\|g|{|_p }, \quad 1\leq p \leq \infty
\]
and by plugging the same in \eqref{lem:1}, we obtain \eqref{lem:1b}.
\end{proof}

\subsubsection{Recovering Higher Order Differences from Modulo Samples}

\label{subsec:RHODMS}

The result of Lemma \ref{Lemma} and in particular, the inequality in \eqref{lem:1b} will be the key to our proposed recovery method. By letting $\T < 1/\Omega e$ and choosing $N$ logarithmically in $\|g\|_\infty$, the right hand side of \eqref{lem:1b} can be made to shrink arbitrarily, ultimately ensuring that $\|{\Delta ^N}\gamma |{|_\infty } < \lambda$. More precisely, assuming that we know some constant $\B$ such that, 
\[
\B \in 2\lambda\mathbb{Z} \quad \mbox{ and } \quad \|g\|_\infty \leq \B
\]
a suitable choice for $N$ that satisfies ${\left( {\T\Omega e} \right)^N}\B \leq \lambda$ is,
\begin{equation}
\label{eq:NO}
\quad N \geq \left\lceil {\frac{{\log \lambda  - \log \B}}{{\log \left( {\T\Omega e} \right)}}} \right\rceil.
\end{equation}
For this choice of $N$ and $\T \leq {\left( {2\Omega e} \right)^{ - 1}}$, \eqref{lem:1b} entails that $ \left( {{\Delta ^N}\gamma } \right)\left[ k \right]$ is unaffected by the modulo operation, that is,
\begin{equation}
\label{eq:modeq}
{\Delta ^N}\gamma \equiv \MO{{\Delta ^N}\gamma}.
\end{equation}
The following proposition relates the right hand side of \eqref{eq:modeq} to the modulo samples $y[k]$ in \eqref{yn}.

\begin{proposition} 
\label{prp:COM}
For any sequence $a[k]$ it holds that
\begin{equation}
\label{yngn}
{\mathscr{M}_{\lambda}(\Delta^N a) =\mathscr{M}_{\lambda}(\Delta^N({\mathscr{M}_{\lambda}(a)}).} 
\end{equation}
\end{proposition}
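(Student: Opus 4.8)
The plan is to reduce \eqref{yngn} to two elementary facts about the centered modulo map $\mathscr{M}_\lambda$. The first is an invariance property: for every $x\R$ and $n\Z$ one has $\ft{x+n}=\ft{x}$, so the definition \eqref{map} immediately gives $\mathscr{M}_\lambda(f+2\lambda n)=\mathscr{M}_\lambda(f)$; applied entrywise this yields $\mathscr{M}_\lambda(b[k]+2\lambda n[k])=\mathscr{M}_\lambda(b[k])$ whenever $n[k]\Z$ for all $k$. The second is that $\Delta^N$ preserves $2\lambda\mathbb{Z}$-valued sequences, since $(\Delta^N b)[k]=\sum_{j=0}^N(-1)^{N-j}\binom{N}{j}b[k+j]$ is an \emph{integer}-coefficient combination of the entries of $b$, and $2\lambda\mathbb{Z}$ is closed under such combinations.

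With these in hand, first I would record the pointwise decomposition $a[k]=\mathscr{M}_\lambda(a)[k]+\varepsilon_a[k]$ with $\varepsilon_a[k]=2\lambda\flr{a[k]/2\lambda+1/2}\in 2\lambda\mathbb{Z}$ for every $k$; this is exactly the computation in the proof of Proposition~\ref{prop:mod}, and I would point out that it uses nothing beyond $h=\ft{h}+\flr{h}$, hence holds for an arbitrary sequence $a[k]$ (not merely for samples of a bandlimited function). Then, by linearity of $\Delta^N$, $\Delta^N a=\Delta^N(\mathscr{M}_\lambda(a))+\Delta^N\varepsilon_a$, and by the second fact $\Delta^N\varepsilon_a$ is again $2\lambda\mathbb{Z}$-valued. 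Applying $\mathscr{M}_\lambda$ to both sides and discarding the term $\Delta^N\varepsilon_a$ via the invariance property gives $\mathscr{M}_\lambda(\Delta^N a)=\mathscr{M}_\lambda\bigl(\Delta^N(\mathscr{M}_\lambda(a))\bigr)$, which is \eqref{yngn}.

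There is essentially no obstacle here; the one point requiring a word of care is that Proposition~\ref{prop:mod} is phrased for bandlimited $g$, so I would either spell out that the residual-in-$2\lambda\mathbb{Z}$ decomposition is valid for arbitrary sequences, or bypass the reference entirely and simply observe from \eqref{map} that $a[k]-\mathscr{M}_\lambda(a)[k]\in 2\lambda\mathbb{Z}$ directly.
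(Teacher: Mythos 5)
Your proof is correct and takes essentially the same route as the paper: decompose $a=\mathscr{M}_\lambda(a)+\varepsilon_a$ with $\varepsilon_a$ valued in $2\lambda\mathbb{Z}$, observe that $\Delta^N$ has integer coefficients so $\Delta^N\varepsilon_a\in 2\lambda\mathbb{Z}$ is annihilated by the modulo, and conclude. The only cosmetic difference is that you discard $\Delta^N\varepsilon_a$ via the invariance $\mathscr{M}_\lambda(f+2\lambda n)=\mathscr{M}_\lambda(f)$ directly, whereas the paper routes the same fact through the identity $\mathscr{M}_\lambda(a_1+a_2)=\mathscr{M}_\lambda\bigl(\mathscr{M}_\lambda(a_1)+\mathscr{M}_\lambda(a_2)\bigr)$; your remark that the decomposition requires no bandlimitedness (only $h=\ft{h}+\flr{h}$) is a point the paper leaves implicit.
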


\begin{proof}
In view of the \emph{Modular Decomposition Property}, we write, $a = \MO{a} + \VO{a}$ to obtain $\Delta^N \MO{a}= \Delta^N {a} - \Delta^N\VO{a}$ where $\VO{a}$ is a simple function. With the observation that $$\MO{a_1 + a_2} = \MO{\MO{a_1} + \MO{a_2}},$$ we obtain, 
\begin{align*}
\MO{\Delta^N{\MO{a}}} & = \MO{  \Delta^N a -  \Delta^N \VO{a}} \\ 
&= \MO{\MO{\Delta^N a} - \MO{\Delta^N \VO{a}} }. 
\end{align*}
Now since $\VO{a}$ in \eqref{eq:SF} take values in $2\lambda\mathbb{Z}$ and the finite difference filter has integer valued coefficients, it follows that $\Delta^N\VO{a} \in 2\lambda\mathbb{Z}$. Consequently, $\Delta^N\VO{a}$ is in the kernel of $\MO{\cdot}$ or $\MO{\Delta^N\VO{a}} = 0$. Hence, 
$$\MO{\Delta^N{\MO{a}}} = \MO{\MO{\Delta^N{a}}}= \MO{\Delta^N{a}},$$
which proves the result in \eqref{yngn}.
\end{proof}

By letting $a = \gamma$ in \eqref{yngn}, we obtain, $\MO{\Delta^N \gamma} = \MO{\Delta^N y}$. Combining this with \eqref{eq:modeq} yields, 
\begin{equation}
\label{eq:modeq2}
\Delta^N  \gamma = \MO{\Delta^N y}.
\end{equation}
Hence, starting with modulo samples $y$ in \eqref{yn}, we are able to relate the same with higher order differences of bandlimited samples $\gamma$.

\subsubsection{Recovery of Samples from Higher Order Differences} 
\label{sec:RSHOD}
The result of Proposition \ref{prop:mod} yields that, 
\begin{equation}
\label{eq:epsgamma}
\VO{\gamma}[k] \DE \gamma[k] -y[k] =: 2\lambda\sum\limits_{m \in {\mathbb{Z}}} {{e[m]}{{\ind}_{{\mathcal{D}_m}}}\left( k\T \right)}, \quad 
\end{equation}
where the sequence $e$ satisfies $e[m] \in \mathbb{Z}$ for all $m\in \mathbb{Z}$.
Hence given modulo samples $y[k]$, recovering $\gamma[k]=g\l k\T \r$ is equivalent to recovering $\VO{\gamma}[k]$. Thanks to Proposition \ref{prp:COM}, with $N$ in \eqref{eq:NO}, we can use the relation in \eqref{eq:modeq} to estimate $\ND{\VO{\gamma}}$ directly from the modulo samples. This is because,
\begin{equation}
\label{eq:egy}
\ND{\VO{\gamma}} = \ND{\l \gamma - y \r} = \MO{\DN{N}{y}} - \DN{N}{y}. 
\end{equation}
It now remains to recover $\VO{\gamma}$ from $\VOB{\gamma}$. We remind the reader that $\VOB{\gamma}\in2\lambda\mathbb{Z}$ and this massive restriction on the range of values that $\VOB{\gamma}$ can take makes the recovery procedure considerably less ill-posed than recovering $\gamma$ from $\ND{\gamma}$. Algorithmically, recursively applying the anti-difference operator $\S$ in \eqref{eq:AD} and then rounding the same to the nearest multiple of $2\lambda$, that is, 
\begin{equation}
\label{eq:RD}
\S \VOB{\gamma} \mapsto 2\lambda \Big\lceil {\frac{{\left\lfloor {\S \VOB{\gamma}/\lambda } \right\rfloor }}{2}} \Big\rceil
\end{equation}
recovers $\DN{N-1}{\VO{\gamma}}$ up to an additive constant. The remaining ambiguity is due to the constant sequence being in the kernel of the first order finite difference operator. The unknown constant can again only take values in $2\lambda\mathbb{Z}$. More precisely, 
\begin{equation}
\label{eq:S1}
\l \Delta^{n-1} \VO{\gamma}\r [k] = \l \S \Delta^{n} \VO{\gamma} \r [k] + \kappa_n l[k]
\end{equation}
where $ l[k] = 2\lambda\mathbb{Z}$ is the constant sequence and $\kappa_n \Z$. Clearly, when $n=1$, the ambiguity cannot be resolved as $\gamma[k] + 2\lambda\mathbb{Z}$ leads to the same modulo samples. However, for ${1<n\leq N}$, we can resolve this ambiguity. By applying $\S$ to \eqref{eq:S1}, we obtain,
\begin{equation}
\label{eq:S2}
\l \Delta^{n-2} \VO{\gamma}\r [k] = \l \S^2 \Delta^{n} \VO{\gamma} \r [k] +  \kappa_{n} \S l[k] +  \kappa_{n-1} l[k]
\end{equation}
where $\l\S l\r[k] ={{{\left\{ {2\lambda k} \right\}}_{k \in {\mathbb{Z} }}}}$ is a linear sequence with slope $2\lambda$. Next, observe that, 
\begin{equation}
\begin{split} &\left( {{\S^2}{\Delta ^n}{\varepsilon _\gamma }} \right)\left[ 1 \right]  - \left( {{\S^2}{\Delta ^n}{\varepsilon _\gamma }} \right)\left[ {J + 1} \right] \hfill \\&   \EQc{eq:S2} \left( {{\Delta ^{n - 2}}{\varepsilon _\gamma }} \right)\left[ 1 \right] - \left( {{\Delta ^{n - 2}}{\varepsilon _\gamma }} \right)\left[ {J + 1} \right] + 2\lambda {\kappa _n}J \hfill \\&   \DA{a} \left( {{\Delta ^{n - 2}}\gamma } \right)\left[ 1 \right] - \left( {{\Delta ^{n - 2}}y} \right)\left[ 1 \right] - \left( {{\Delta ^{n - 2}}\gamma } \right)\left[ {J + 1} \right] \\ 
& \qquad + \left( {{\Delta ^{n - 2}}y} \right)\left[ {J + 1} \right] + 2\lambda {\kappa _n}J
\end{split}
\label{S2}
\end{equation}
where $(\mathrm{a})$ uses the modulo decomposition property \eqref{eg}. Furthermore, $\|y\|_\infty \leqslant \lambda$ implies that $\left( {{\Delta ^{n - 2}}y} \right) \leqslant {2^{n - 2}}\lambda$ and from Lemma \ref{Lemma}, \eqref{lem:1b}, we have $\|{\Delta ^{n - 2}}\gamma {\|_\infty } \leqslant {\left( {\T\Omega e} \right)^{n - 2}}\B$. Hence, 
\begin{equation*}
\begin{split}
& |  \left( {{\Delta ^{n - 2}}\gamma } \right)\left[ 1 \right]  - \left( {{\Delta ^{n - 2}}\gamma } \right)\left[ {J + 1} \right]  -\left( {\left( {{\Delta ^{n - 2}}y} \right)\left[ 1 \right] - \left( {{\Delta ^{n - 2}}y} \right)\left[ {J + 1} \right]} \right)| \\
&  \leq 2{\left( {\T\Omega e} \right)^{n - 2}}\B + 2^{n-1}\lambda.
\end{split}
\end{equation*}
From the above inequality, it is clear that oversampling or $\T < 1/\Omega e$ keeps the max-norm of the higher order differences bounded. Using this with \eqref{S2}, we conclude,
\begin{equation}
\label{eq:interbound}
\begin{split}
&\left( {{\S^2}{\Delta ^n}{\varepsilon _\gamma }} \right)\left[ 1 \right]  - \left( {{\S^2}{\Delta ^n}{\varepsilon _\gamma }} \right)\left[ {J + 1} \right] \\ 
& \in 2\lambda {\kappa _n}J + \l2{\left( {\T\Omega e} \right)^{n - 2}}\B + 2^{n-1}\lambda\r \left[-1,1 \right]\\
& \subseteq 2\lambda J\left[ {{\kappa _n} - {\rho _n}\left( {\lambda ,\B} \right),{\kappa _n} + {\rho _n}\left( {\lambda ,\B} \right)} \right]
\end{split}
\end{equation}
where,
\begin{equation}
\label{eq:RN}
{\rho _n}\left( {\lambda ,\B} \right) \DE \frac{1}{J}\left( {\frac{\B}{\lambda } + {2^{n - 2}}} \right).
\end{equation}
Here, the last inclusion follows as $\T\leq \l 2\Omega e\r^{-1}$. Using $n\leq N$ and $\T\leq \l 2\Omega e\r^{-1}$, again, we have
\[
{2^{n - 1}} \leqslant {2^{N - 1}} \leqslant {\left( {{\B}/{\lambda }} \right)^{ - \frac{1}{{{{\log }_2}\left( {\T\Omega e} \right)}}}} \leqslant {\B}/{\lambda }.
\]
With \eqref{eq:RN}, the above inequality yields ${\rho _n}\left( {\lambda ,\B} \right) = \frac{3\B}{2\lambda J}$. Let us define the sequence,
\[
\zeta _{{\varepsilon _\gamma }}^n\left[ k \right] \DE \left( {{{\S}^2}{\Delta ^n}{\varepsilon _\gamma }} \right)\left[ k \right].
\]
Setting ${\rho _n}\left( {\lambda ,\B} \right) = \frac{3\B}{2\lambda J}$ in \eqref{eq:interbound} entails, 
\begin{align*}
&  \zeta _{{\varepsilon _\gamma }}^n\left[ 1 \right] - \zeta _{{\varepsilon _\gamma }}^n\left[ {J + 1} \right] \subseteq 2\lambda J\left[ {{\kappa _n} - \frac{{3{\beta _g}}}{{2\lambda J}},{\kappa _n} + \frac{{3{\beta _g}}}{{2\lambda J}}} \right] \hfill \\
   \Leftrightarrow & \ {\kappa _n} \in \frac{1}{{2\lambda J}}\left( {\left( {\zeta _{{\varepsilon _\gamma }}^n\left[ 1 \right] - \zeta _{{\varepsilon _\gamma }}^n\left[ {J + 1} \right]} \right) + \left[ { - 3{\beta _g},3{\beta _g}} \right]} \right).
\end{align*} 
For $J = {{6\B}}/{\lambda }$, the right hand side in the above is an interval of length ${1}/{2}$ and hence only contains one integer, namely, 
\begin{equation}
\label{eq:kn}
{\kappa _n} = \Big\lfloor {\frac{{\left( {{\S^2}{\Delta ^n}{\varepsilon _\gamma }} \right)\left[ 1 \right] - \left( {{\S^2}{\Delta ^n}{\varepsilon _\gamma }} \right)\left[ {J + 1} \right]}}{{12\B}} + \frac{1}{2}} \Big\rfloor. 
\end{equation}
Together with the rounding operation defined in \eqref{eq:RD}, the estimate of ${\kappa _n}$ allows us to recursively compute, 
\begin{equation}
\label{eq:sn}
n\in [0,N-2],\quad s_{\l n + 1 \r}[k] = 2\lambda \Big\lceil {\frac{{\left\lfloor {\S s_{\l n  \r}[k]/\lambda } \right\rfloor }}{2}} \Big\rceil+ 2\lambda\kappa_n
\end{equation}
where the sequence $s_{\l n  \r}[k]$ is initialized with initial condition $ s_{\l {0}  \r}[k] \EQc{eq:egy}\VOB{\gamma}[k]$.
This completes the recovery procedure which is summarized in Algorithm \ref{alg:ModSamp}. The following theorem provides a recovery guarantee for this algorithm.
\begin{theorem}[Unlimited Sampling Theorem] \label{thm:UST} Let $g\l t\r \in \PW{\Omega}$ and 
${y[k]} = {\left. {\MO{ {g\left( t \right)}}} \right|_{t = k\T}},k \in \mathbb{Z}$ in (\ref{yn}) be the modulo samples of $g\l t \r$ with sampling rate $\T$. Then a sufficient condition for recovery of $g\l t \r$ from the $\{y[k]\}_k$ {up to additive multiples of $2\lambda$} is that
\begin{equation}
\label{MSB}
\T \leq \frac{1}{2\Omega e}.
\end{equation}
\noindent Provided that this condition is met and assuming that $\B\in 2\lambda \mathbb{Z}$ is known with $\|g\|_\infty\leqslant \B$, then choosing
\begin{equation}
\label{NB}
N = \left\lceil {\frac{{\log \lambda  - \log {\B }}}{{\log \left( {{\T}\Omega e} \right)}}} \right\rceil,
\end{equation}
yields that ${\left( {\T\Omega e} \right)^N}{\left\| {g} \right\|_\infty } < \lambda$ and Algorithm 1 recovers $g$ from $y$ {again up to the ambiguity of adding multiples of $2\lambda$}. 
\end{theorem}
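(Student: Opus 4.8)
The plan is to verify that the construction developed in Section~\ref{sec:Rec} actually terminates with a correct reconstruction, so the bulk of the proof is a bookkeeping argument that the chain of estimates established before the theorem statement fits together. First I would observe that condition \eqref{MSB}, i.e.\ $\T\leq\l2\Omega e\r^{-1}$, guarantees $\T\Omega e\leq 1/2<1$, so the exponent $N$ in \eqref{NB} is a well-defined positive integer and, by Lemma~\ref{Lemma}, inequality \eqref{lem:1b}, we get ${\l\T\Omega e\r^N}\|g\|_\infty\leq{\l\T\Omega e\r^N}\B\leq\lambda$. Hence $\|\Delta^N\gamma\|_\infty\leq\lambda$, which is exactly the hypothesis needed for \eqref{eq:modeq}, namely $\Delta^N\gamma\equiv\MO{\Delta^N\gamma}$; combining this with Proposition~\ref{prp:COM} applied to $a=\gamma$ (using $y=\MO{\gamma}$ pointwise) yields \eqref{eq:modeq2}, $\Delta^N\gamma=\MO{\Delta^N y}$. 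So from the data $y$ alone we have exact access to $\Delta^N\gamma$, and therefore, via \eqref{eq:egy}, to $\ND{\VO{\gamma}}=\MO{\DN{N}{y}}-\DN{N}{y}$, an integer-multiple-of-$2\lambda$ sequence.

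Next I would carry out the descending recursion that recovers $\VO{\gamma}$ from $\ND{\VO{\gamma}}$, one order of anti-difference at a time. The key point, already worked out in \eqref{eq:S1}--\eqref{eq:kn}, is that applying $\S$ recovers $\Delta^{n-1}\VO{\gamma}$ only up to an additive constant sequence $\kappa_n\, l[k]$ with $\kappa_n\in\mathbb{Z}$, and that this unknown integer is pinned down exactly by the closed-form expression \eqref{eq:kn} whenever $1<n\leq N$: the decisive inputs are that $\|y\|_\infty\leq\lambda$ forces $\|\Delta^{n-2}y\|_\infty\leq 2^{n-2}\lambda$, that \eqref{lem:1b} forces $\|\Delta^{n-2}\gamma\|_\infty\leq\l\T\Omega e\r^{n-2}\B\leq\B$ (again using $\T\Omega e\leq1/2$), and that $2^{n-1}\leq 2^{N-1}\leq\B/\lambda$ (from the explicit form of $N$ and $\T\Omega e\leq1/2$), so that $\rho_n(\lambda,\B)=\tfrac{3\B}{2\lambda J}$ and the choice $J=6\B/\lambda$ shrinks the localizing interval for $\kappa_n$ to length $1/2$, hence to a single integer. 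I would then state that the rounding step \eqref{eq:RD} is exact because the true $\Delta^{n-1}\VO{\gamma}$ lives on the grid $2\lambda\mathbb{Z}$ and the quantity being rounded differs from a grid point by the bounded error we just controlled (this is where the consistency constraint of Proposition~\ref{prop:mod} does its work). Iterating \eqref{eq:sn} from $n=0$ up to $n=N-2$ thus produces $s_{(N-1)}[k]=\Delta^0\VO{\gamma}[k]=\VO{\gamma}[k]$ up to the single irreducible constant at level $n=1$, which is an integer multiple of $2\lambda$ and cannot be resolved since $\gamma[k]$ and $\gamma[k]+2\lambda c$ produce identical modulo samples --- this is precisely the ``up to additive multiples of $2\lambda$'' caveat in the statement.

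Finally I would close the loop: having recovered $\widetilde\varepsilon_\gamma=\VO{\gamma}+2\lambda c\,l$, set $\widetilde\gamma[k]=y[k]+\widetilde\varepsilon_\gamma[k]=\gamma[k]+2\lambda c$ by \eqref{eq:epsgamma}, and reconstruct the continuous-time signal by the Shannon interpolation \eqref{ST}, legitimate because $\T\leq\l2\Omega e\r^{-1}<\pi/\Omega$, obtaining $g(t)+2\lambda c$. I expect the main obstacle --- or at least the part requiring the most care --- to be the constant-tracking in the anti-difference recursion: making fully precise that the errors accumulated through $\S$ and through the previous rounding steps stay strictly below the $2\lambda J$ resolution threshold that isolates $\kappa_n$, and verifying that the telescoping boundary evaluation at indices $1$ and $J+1$ in \eqref{S2} correctly cancels the unknown slope term $\kappa_n\S l[k]$ while leaving $\kappa_n$ extractable. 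Everything else is an assembly of inequalities already in hand; the delicate accounting is ensuring the ``off-grid'' perturbation never exceeds $\lambda$ at any level $n\leq N$ so that each rounding in \eqref{eq:RD} and each floor in \eqref{eq:kn} is provably exact.
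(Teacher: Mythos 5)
Your proposal is correct and takes essentially the same route as the paper: the paper proves Theorem~\ref{thm:UST} as the noiseless specialization of Theorem~\ref{thm:USTN}, whose proof is precisely your argument---obtain $\Delta^N\gamma = \MO{\Delta^N y}$ from Lemma~\ref{Lemma} and Proposition~\ref{prp:COM}, then recurse through the anti-difference levels, pinning down each unknown constant $\kappa_n$ via the interval bound \eqref{eq:interbound} with $J=6\B/\lambda$ and the closed form \eqref{eq:kn}, and finish with \eqref{eg} and Shannon interpolation. The only (harmless) slip is an off-by-one in your bookkeeping: the recursion delivers $s_{(N-1)}=\Delta\VO{\gamma}$, and it is the extra anti-difference in step 5 of Algorithm~\ref{alg:ModSamp} that yields $\VO{\gamma}$ up to the irreducible constant in $2\lambda\mathbb{Z}$.
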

\ab{The proof of this Theorem is deferred to the next section (Section \ref{sec:RNC}), which considers the case of bounded noise. The same proof applies in the absence of noise.}

\subsubsection{Recovery Conditions for the Case of Bounded Noise}
\label{sec:RNC}

\ab{In this section, we consider the effect of bounded noise on the recovery procedure. More precisely, we assume that the modulo samples $y[k]$ are affected by noise $\eta$ of amplitude bounded by a constant $ \bo > 0$. That is,
\begin{equation}
\label{eq:YN}
\forall k \Z, \quad \YN\left[ k \right] = y[k] + \eta\left[ k \right], \quad \left| {\eta \left[ k \right]} \right| \leqslant {\bo}.
\end{equation}
Note that due the presence of noise, it may happen that $\YN [k] \not\in [-\lambda,\lambda]$. Nonetheless, for $\bo$ below some fixed threshold, our recovery method provably recovers noisy bandlimited samples $\gamma[k]$ from the associated noisy modulo samples $\YN[k]$ up to an unknown additive constant, where the noise appearing in the recovered samples is in entry-wise agreement with the one affecting the modulo samples. That is, $\widetilde \gamma \left[ k \right] = \gamma \left[ k \right] + \eta \left[ k \right] + 2m\lambda, m\Z$ .  
} 

\begin{theorem}[Unlimited Sampling Theorem with Noise] \label{thm:USTN} Let $g\l t\r \in \PW{\Omega}$ and assume that $\B\in 2\lambda \mathbb{Z}$ is known with $\|g\|_\infty\leqslant \B$. For the dynamic range we work with the normalization $\DR = {\B}/{\lambda}$.

Let the noisy modulo samples be of the form \eqref{eq:YN} with a noise bound given in terms of the dynamic range as 
\begin{equation}
\label{eq:mns}
\left\| \eta  \right\|_\infty \leqslant \frac{\lambda }{4}{\left( {{{2\cdot\DR}}} \right)^{ - \frac{1}{\alpha}}},  \qquad \alpha \in \mathbb{N}.
\end{equation}
Then a sufficient condition for Algorithm \ref{alg:ModSamp} with
\begin{equation}
\label{NB_noise}
N = \left\lceil {\frac{{\log \lambda  - \log {\B }}-1}{{\log \left( {{\T}\Omega e} \right)}}} \right\rceil,
\end{equation}
to approximately recover the bandlimited samples $\gamma[k]$ in the sense that it returns $\widetilde \gamma \left[ k \right] = \gamma \left[ k \right] + \eta \left[ k \right] + 2m\lambda, m\Z$ is that
\begin{equation}
\label{MSBN}
\T \leq \frac{1}{2^\alpha \Omega e}.
\end{equation}
\end{theorem}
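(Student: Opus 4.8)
The plan is to carry the noiseless analysis of Section~\ref{sec:RNC} through with the noise $\eta$ inserted everywhere, tracking how the error bound $\bo$ propagates through the two stages of the algorithm: (i) recovering $\ND{\gamma}$ from the noisy modulo samples via the commutation identity, and (ii) unwinding the anti-difference operator $\S$ back up to $\gamma$. First I would replace $y[k]$ by $\YN[k]=y[k]+\eta[k]$ and note that $\ND{\YN}=\ND{y}+\ND{\eta}$, with $\|\ND{\eta}\|_\infty\leq 2^N\|\eta\|_\infty$. The key requirement for Proposition~\ref{prp:COM} to still deliver $\MO{\ND{\YN}}=\ND{\gamma}+\ND{\eta}$ is that $\|\ND{\gamma}+\ND{\eta}\|_\infty<\lambda$ so that the modulo is inactive on this sequence. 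By Lemma~\ref{Lemma}, \eqref{lem:1b}, $\|\ND{\gamma}\|_\infty\leq(\T\Omega e)^N\B$, and the choice \eqref{NB_noise} (note the extra $-1$ in the numerator compared to \eqref{NB}) is engineered so that $(\T\Omega e)^N\B\leq\lambda/2$, leaving a budget of $\lambda/2$ for $\|\ND{\eta}\|_\infty\leq 2^N\|\eta\|_\infty$. So the first sub-goal is to verify that the noise bound \eqref{eq:mns}, combined with $\T\leq(2^\alpha\Omega e)^{-1}$ from \eqref{MSBN}, forces $2^N\|\eta\|_\infty\leq\lambda/2$; this is where the exponent $1/\alpha$ and the factor $2\cdot\DR$ in \eqref{eq:mns} must be reconciled with the value of $N$ coming out of \eqref{NB_noise} under the slower sampling rate \eqref{MSBN}.

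Next I would handle the recovery of $\VO{\gamma}$ (now really $\VO{\gamma}+$noise) from $\ND{\VO{\gamma}}$. The quantity actually computed is $\ND{\YN}-\MO{\ND{\YN}}$, which equals $\ND{\VO{\gamma}}-\ND{\eta}$ once we know the modulo is inactive; since $\ND{\VO{\gamma}}\in 2\lambda\mathbb{Z}$ exactly, the rounding step \eqref{eq:RD} will recover $\ND[N-1]{\VO{\gamma}}$ correctly provided the perturbation $\S(\ND{\eta})$ (plus the already-present error) stays within the rounding half-window, i.e.\ below $\lambda$ in magnitude. This forces a bound on partial sums of $\ND{\eta}$, which over a window of length $J$ grows like $J\cdot 2^N\|\eta\|_\infty$; here the choice $J=6\DR$ (as in the noiseless proof) and the recursive structure of \eqref{eq:sn} must be revisited to confirm the accumulated error at each of the $N-1$ rounding/anti-difference steps remains controlled. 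The integer $\kappa_n$ in \eqref{eq:kn} is recovered by an argument identical to the noiseless one once the interval containing it is shown to still have length $<1$ after the noise perturbation widens it; I expect this to go through with essentially the same $J=6\DR$ up to constants, which is presumably why \eqref{eq:mns} carries the factor $2$ and the $\lambda/4$ rather than $\lambda/2$.

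Finally I would assemble the pieces: after recovering $\widetilde\varepsilon_\gamma$ up to an additive multiple of $2\lambda$, adding it to $\YN[k]=y[k]+\eta[k]$ gives $\gamma[k]+\eta[k]+2m\lambda$, which is exactly the claimed output $\widetilde\gamma[k]=\gamma[k]+\eta[k]+2m\lambda$. The continuous-time statement then follows as before by low-pass filtering, but since the theorem as stated only asserts recovery of the samples, no extra work is needed there; I would remark that the noise is passed through entry-wise and not amplified in the final samples, only in the intermediate difference sequences.

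The main obstacle I anticipate is the bookkeeping in the two-stage error propagation — specifically, choosing the split of the $\lambda$-budget between $\|\ND{\gamma}\|_\infty$ and $\|\ND{\eta}\|_\infty\leq 2^N\|\eta\|_\infty$ and showing that the noise bound \eqref{eq:mns} with exponent $1/\alpha$ is exactly what is needed given that $N$ from \eqref{NB_noise} scales like $\log(\DR)/|\log(\T\Omega e)|$ and the sampling rate condition \eqref{MSBN} replaces $2\Omega e$ by $2^\alpha\Omega e$. The delicate point is that $2^N$ and $\DR^{1/\alpha}$ must be shown to be comparable under these choices: with $\T\Omega e\leq 2^{-\alpha}$ one gets $2^N\leq\DR^{1/\alpha}$ (up to the constant from the $-1$ and the ceiling), and then $2^N\|\eta\|_\infty\leq\DR^{1/\alpha}\cdot\frac{\lambda}{4}(2\DR)^{-1/\alpha}\leq\lambda/4<\lambda/2$, closing the argument; making every one of these inequalities rigorous with the ceilings in \eqref{NB_noise} is the part that requires care rather than insight. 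Everything downstream is a rerun of the noiseless proof with the perturbed sequences, so the only genuinely new estimate is this reconciliation of $N$, $\alpha$, $\DR$, and $\|\eta\|_\infty$.
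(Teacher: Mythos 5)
Your first stage coincides with the paper's argument: the extra $-1$ in \eqref{NB_noise} gives $(\T\Omega e)^N\B\leqslant\lambda/2$, hence $\|\ND{\gamma}\|_\infty\leqslant\lambda/2$ by \eqref{lem:1b}, and \eqref{MSBN} gives $2^N\leqslant 2\l 2\cdot\DR\r^{1/\alpha}$, so that \eqref{eq:mns} yields $\|\ND{\eta}\|_\infty\leqslant 2^N\|\eta\|_\infty\leqslant\lambda/2$ (your bound $2^N\leqslant\DR^{1/\alpha}$ drops the ceiling/$-1$ constants, but the budget split is the right one). The genuine gap is in your second stage, and it originates in an incorrect identity: under exactly these two bounds, the quantity computed in step 3 of Algorithm \ref{alg:ModSamp} satisfies $\MO{\ND{\YN}}-\ND{\YN}=\l\ND{\gamma}+\ND{\eta}\r-\l\ND{y}+\ND{\eta}\r=\ND{\VO{\gamma}}$, because the modulo reduction of $\ND{\YN}$ \emph{retains} the term $\ND{\eta}$ (via Proposition \ref{prp:COM} and the inactivity of $\MO{\cdot}$ on $\ND{\gamma}+\ND{\eta}$, whose max-norm is at most $\lambda$). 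The noise cancels exactly; the computed sequence is \emph{not} $\ND{\VO{\gamma}}-\ND{\eta}$ as you claim. This cancellation is the crux of the paper's proof: after step 3 the algorithm operates on the exact sequence $\ND{\VO{\gamma}}\in 2\lambda\mathbb{Z}$, so the anti-difference, the rounding \eqref{eq:RD}, and the recovery of $\kappa_n$ via \eqref{eq:kn} proceed verbatim as in the noiseless induction with the same $J=6\B/\lambda$; the noise re-enters only in step 5, when $\YN=y+\eta$ is added back, which is precisely why the output is $\widetilde\gamma[k]=\gamma[k]+\eta[k]+2m\lambda$.

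Because you posit a residual noise term, you are driven into a perturbation analysis whose decisive step you leave as an expectation: you would need to show that the interval containing $\kappa_n$ still has length below $1$ after the perturbation, yet the noise budget \eqref{eq:mns} is already fully consumed in stage one (it delivers exactly $2^N\|\eta\|_\infty\leqslant\lambda/2$, with nothing to spare), so the factor $2$ and the $\lambda/4$ in \eqref{eq:mns} cannot be attributed to this second stage as you surmise. Your growth estimate is also off: the partial sums of $\ND{\eta}$ do not scale like $J\cdot 2^N\|\eta\|_\infty$, since $\l\S\,\Delta^N\eta\r[k]=\l\Delta^{N-1}\eta\r[k+1]-\l\Delta^{N-1}\eta\r[1]$ telescopes and is bounded by $2^N\|\eta\|_\infty$ independently of the window length. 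A perturbation-tracking proof could likely be patched together (the rounding to $2\lambda\mathbb{Z}$ would absorb any perturbation below $\lambda$), but as written your plan starts from a wrong expression and leaves its key estimate unverified; the clean route is the exact cancellation identity above, after which the noiseless proof applies unchanged.
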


\begin{proof}
 We start by observing that the modified choice  of $N$ as given in  \eqref{NB_noise} yields that ${(T\Omega e)^N} \leq {\left( {{\lambda }/{{{(2\beta _g)}}}} \right)}$. Together with \eqref{lem:1b}, this yields that $\|\Delta^N \gamma\|_\infty\leqslant \frac{\lambda }{2} $.
Similarly, the modified choice of $T$ yields that
\begin{equation*} 
{2^N} \leqslant 2\cdot 2^{{\frac{{\log \lambda  - \log {\B }}-1}{{\log \left( {{\T}\Omega e} \right)}}} }
\leqslant 
{2\cdot\left( {\frac{2\B }{\lambda}} \right)^{\frac{1}{{ \alpha }}}} = 
{2\cdot\left( {2\cdot\DR } \right)^{\frac{1}{{ \alpha }}}} ,
\end{equation*}
which yields via Young's inequality and \eqref{eq:mns} that 
\begin{equation}
\label{eq:2Ngam} 
\|\Delta^N \eta\|_\infty \leq 2^N \|\eta\|_\infty \leqslant  \frac{\lambda}{2}.
\end{equation}
These observations now entail that even the noisy modulo measurements $\YN$ allow for exact computation of $\VO{\gamma}[k]$. Indeed, we observe that
\begin{align*}
&\MO{\Delta^N\YN} - \Delta^N\YN  \\
& \DL{a} \MO{\MO{\Delta^N \eta} + \MO{\Delta^N y}}- \Delta^N \l y +  \eta\r \\
& \DL{b} \MO{\Delta^N \gamma + \Delta^N \eta}- \Delta^N y- \Delta^N \eta \\
& \DL{c} \Delta^N \gamma - \Delta^N y = \ND{\VO{\gamma}}.
\end{align*}  
Here in (a), we used Proposition~\ref{prp:COM}. Equality (b) follows from \eqref{eq:modeq2} and \eqref{eq:2Ngam}. For (c) we again use \eqref{eq:2Ngam} combined with the bound $\|\Delta^N \gamma\|_\infty\leqslant \frac{\lambda }{2} $ derived above.

We will show now by induction in $m \in [0, N-1]$ that ${s_{\left( m \right)}} = {\Delta ^{N - m}\VO{\gamma}}$ for ${s_{\left( m \right)}}$ as in \eqref{eq:sn} and $\VO{\gamma}$ as in \eqref{eq:epsgamma}. The choice $m=N-1$ then ensures that ${\S s_{\left( N-1 \right)}} = \VO{\gamma} + 2k\lambda$ for some $k\in \mathbb{Z}$. As per modulo decomposition in Proposition \ref{prop:mod}, this ensures that step 5 of Algorithm \ref{alg:ModSamp} returns \ab{noisy} bandlimited samples $\widetilde \gamma[k] =\gamma[k] + 2m \lambda + \eta[k]$, that is, one  recovers up to the \ab{measurement noise and an} unavoidable constant ambiguity of $2\lambda\mathbb{Z}$. The result then follows from Shannon's sampling theorem which is implemented in step 6 of Algorithm \ref{alg:ModSamp}. 

To prove that ${s_{\left( m \right)}} = {\Delta ^{N - m}\VO{\gamma}}, \quad m \in [0, N-1]$, we first note that the induction seed $m=0$ reduces to \ab{showing that} $s_{\l{0}\r} = \ND{\VO{\gamma}}$, which holds by definition. 

For the induction step, assume that for some $ m \in [0, N-2]$, one has ${s_{\left( m \right)}} = {\Delta ^{N - m}\VO{\gamma}}$. Recall that we have derived above that for $J = 6\B/\lambda$ as chosen in Algorithm \ref{alg:ModSamp}, $\kappa_m$ given by \eqref{eq:kn} is the unique multiple of $2\lambda$ for which $s_{\l m+1 \r}[k]$ defined in \eqref{eq:sn} is a bounded sequence. Hence, combining \eqref{eq:S1} with the induction hypothesis and the observation that ceiling and floor in \eqref{eq:sn} only have an effect in case of round-off errors, we conclude that ${s_{\left( {m + 1} \right)}} = {\Delta ^{N - \left( {m + 1} \right)}} \VO{\gamma}$ as desired.
\end{proof}

\begin{algorithm}[!t]
\label{alg:ModSamp}
\KwData{\ \ ${\YN[k]}$ and $2\lambda \mathbb Z \ni \B \geqslant {\left\| g \right\|}_\infty$.}
\KwResult{$\widetilde g\l t \r \approx g\l t \r$.}
\begin{enumerate}[leftmargin=*,label=$\arabic*)$]
\itemsep 2pt
\item Compute $N = \left\lceil {\frac{{\log \lambda  - \log \B}}{{\log \left( {\T\Omega e} \right)}}} \right\rceil$ using \eqref{eq:NO}.
\item Compute $\l \ND{\YN}\r [k]$.
\item Compute $\l\ND{\VO{\gamma}}\r[k] =  \l \MO{\ND{\YN}} -\ND{\YN}\r [k]$. \\ 
	Set $s_{\l 0 \r}[k] = \l\ND{\VO{\gamma}}\r[k]$.
\item for $n = 0:N-2$

\begin{enumerate}[leftmargin=*,label= $\l\roman*\r$]
\item $s_{\l n + 1 \r}[k] = \l\S s_{\l n \r}\r[k]$.

\item $s_{\l n + 1 \r} \EQc{eq:RD} 2\lambda \left\lceil {\frac{{\left\lfloor {s_{\l n + 1 \r}/\lambda } \right\rfloor }}{2}} \right\rceil$ (rounding to $2\lambda\mathbb{Z}$).

\item Compute $\kappa_{n}$ in \eqref{eq:kn}.

\item $s_{\l n + 1 \r}[k] = s_{\l n + 1 \r}[k] + 2\lambda \kappa_{\l n \r}  $.

\end{enumerate}
end 

\item $\widetilde \gamma [k] = \l \S s_{\l N-1 \r}\r [k] + \YN[k] + 2m\lambda,\ \   m\Z$.

\item Compute $\widetilde g\l  t\r$ from $\widetilde \gamma[k] $ using low-pass filtering,
\[\widetilde g\left( t \right) \EQc{ST} \sum\limits_{k \in \mathbb{Z}} {\widetilde \gamma } \left[ k \right]\operatorname{sinc} \left( {t/\T - k} \right).\] 
\end{enumerate}
\caption{Recovery from Noisy Modulo Samples}
\end{algorithm}

\begin{figure*}[!t]
\centering
\includegraphics[width =1\textwidth]{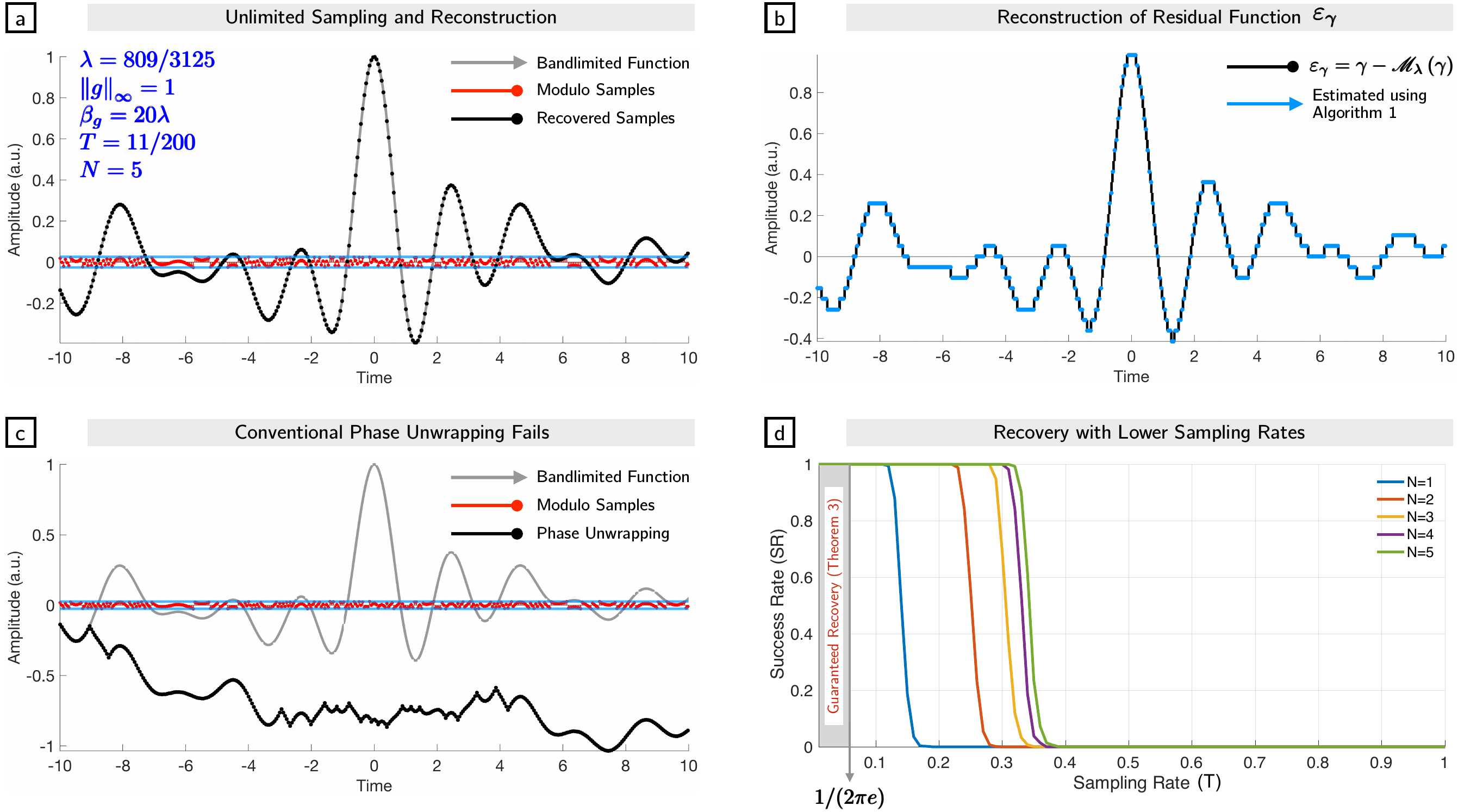}
\caption{Modulo sampling of bandlimited functions. (a) One of the $1000$ realizations of a randomly generated bandlimited function $\BLP{g}$, its modulo samples $y[k]$ and perfectly recovered samples $\widetilde{\gamma}[k]$. (b) Residual function $\VO{\gamma}$ and its estimated version which is recovered up to machine precision. (c) Comparision between Algorithm \ref{alg:ModSamp} and phase unwrapping (based on Itoh's condition \cite{Itoh:1982}). Phase unwrapping method fails to recover bandlimited samples. (d) Assessment of the sharpness of the unlimited sampling theorem. }
\label{fig:Result_1}
 \end{figure*}

\section{Numerical Experiments}
\label{sec:NE}

In this section, we numerically verify several results linked with the topic of unlimited sampling and reconstruction. In particular, (a) we verify our sampling theorem and explore its limitations, (b) compare Algorithm \ref{alg:ModSamp} with Itoh's method for phase unwrapping and (c) confirm that our approach is stable in the presence of quantization noise. These results are plotted in \fig{fig:Result_1} (a)--(d) and \fig{fig:Result_2} (a)--(c). 

\subsection{Verifying the Unlimited Sampling Theorem}

\label{sec:VUST}

In order to verify our sampling theorem, we generate $1000$ realizations of a bandlimited function $\BLP{g}$ with piecewise constant Fourier spectrum taking values chosen uniformly at random, $\widehat{g}\l \omega \r \in \ud{0}{1}$ where $\ud{a}{b}$ denotes the uniform random distribution. The bandlimited function is normalized such that $||g||_\infty =1$. Furthermore, for each realization, we use $\lambda \in \ud{\frac{1}{100}}{\frac{1}{10}}$ and $\B = \left\lfloor {{{\left\| g \right\|}_\infty }/2\lambda } \right\rfloor$. We obtain modulo samples $y[k]$ in \eqref{yn} using sampling rate $\T= \frac{11}{200}<\frac{1}{2\pi e}$. There on, we use Algorithm \ref{alg:ModSamp} to recover bandlimited samples. Although Algorithm \ref{alg:ModSamp} recovers bandlimited samples up to an unknown additive constant of $2\lambda\mathbb{Z}$, we use the knowledge of ground truth to estimate this unknown constant so that we can compute the mean squared error. For each of the $1000$ realizations, the samples $\widetilde{\gamma}$ are recovered up to machine precision ($10^{{-33}}$). 

A specific realization of the experiment is shown in \fig{fig:Result_1} (a). With $\lambda = 809/3125, \B = 20\lambda$ and $\T= {11}/{200}$, we estimate $N=5$. The recovered samples $\widetilde{\gamma}$ are also shown in \fig{fig:Result_1} (a). The mean squared error between the bandlimited samples and recovered samples in this case is $6.2455\times 10^{-33}$. The residue function corresponding to this realization, that is $\VO{\gamma}$, is computed using the ground truth and its estimated version which is obtained using Algorithm \ref{alg:ModSamp}. This is shown in \fig{fig:Result_1} (b). We also compare our reconstruction with the phase unwrapping method based on Itoh's condition \cite{Itoh:1982}. As explained in \fig{fig:PU}, Itoh's condition succeeds when $|\Delta y| < \lambda$, which is not the case here. As shown in \fig{fig:Result_1} (c), phase unwrapping based reconstruction fails.

 \subsection{Exploring Sharpness of the Unlimited Sampling Theorem}
\label{sec:VIUST}

On one hand, the injectivity condition in Theorem \ref{thm:IUS} proves that any sampling rate above the Nyquist rate guarantees unique representation of a bandlimited function in terms of its modulo samples. On the other hand, for Algorithm \ref{alg:ModSamp} to succeed, the sampling rate must be a factor of $2 \pi e$ faster than Nyquist rate. Here we set up a demonstration that shows that Algorithm \ref{alg:ModSamp} succeeds even when the sampling rate is much slower than what is prescribed by Theorem \ref{thm:UST}, however, the recovery is not guaranteed. For any $\BLP{g}$, we denote the critical sampling rate by $\TS = 1$ and the corresponding sampling rate prescribed by the unlimited sampling theorem is $\TUS = 1/2\pi e$. We consider modulo samples  acquired with $\lambda = 2/10$, so the theorem is valid for all orders $N\geq 1$. To assess the sharpness of Theorem \ref{thm:UST}, we use $1000$ realizations of randomly generated bandlimited functions with $||g||_\infty = 1$. Each realization is then sampled with sampling rate $\T = 0.01,\ldots,\TS$ in steps of $1/100$ and we set $\B=1$. For sampling rates beyond the applicability of the theorem, that is, $\T>\TUS$, we run Algorithm \ref{alg:ModSamp} for each of $1000$ realizations by varying from $N = 1$ to $N=5$. Then for each combination of $\T$ and $N$, we compute the \emph{success rate} which is defined as the fraction of trials in which the algorithm reconstructs up to machine precision (in the sense of mean-squared error). As shown in \fig{fig:Result_1} (d), indeed the recovery is possible even when $\T \in \l \TUS,\TS\r$. Furthermore, higher order differences extend recover to smaller over-sampling factors, but there seems to be a minimum oversampling rate around $\T=0.4$ above which the algorithm always fails for the chosen value of $\lambda$.

\subsection{Quantization and Bounded Noise}
\label{sec:VS}

As discussed in Section \ref{sec:RNC}, Theorem~\ref{thm:USTN} provides recovery guarantees in the presence of bounded noise. A main motivation for this noise model is that it includes the error caused by round-off quantization as in Section \ref{ssec:MMUS}, Step 5. In \fig{fig:Result_2} we illustrate the recovery guarantees for $3$--bit quantized modulo samples. That is, each modulo measurement is rounded to the closest element of
\[
\EuScript{C}_{3,\lambda} = \left\{\pm\frac{\lambda}{8},\pm\frac{3\lambda}{8},\pm\frac{5\lambda}{8},\pm\frac{7\lambda}{8}\right\}.
\] 
In other words, the quantized measurements $y_\eta$ are incurring quantization noise $\|\eta\|_\infty\leq \tfrac{\lambda}{8}$. In the experiment, we use a bandlimited function with $||g||_\infty = 12.5$ and $\lambda =1$. As before, $\T= {11}/{200}$ with $\B = 14\lambda$ and we set $N=2$. \aq{The experiment shows that our method is even more robust than predicted by Theorem~\ref{thm:USTN}:} despite the quantization noise $\left\| \eta  \right\|_\infty \approx \lambda/8$, Algorithm \ref{alg:ModSamp} recovers the underlying (noisy) bandlimited samples which are shown in \fig{fig:Result_2} (b). The reconstruction mean squared error (MSE) between $\gamma$ and its estimate $\widetilde{\gamma}$ (using Algorithm \ref{alg:ModSamp}) is is observed to be $5.1\times 10^{-3}$ which is the same as the  MSE between $y$ and $\YN$ and much smaller than the MSE of $3\times 10^{-1}$ incurred with $3$-bit quantization of non-modulo samples. 

\begin{figure}[!t]
\centering
\includegraphics[width =0.8\columnwidth]{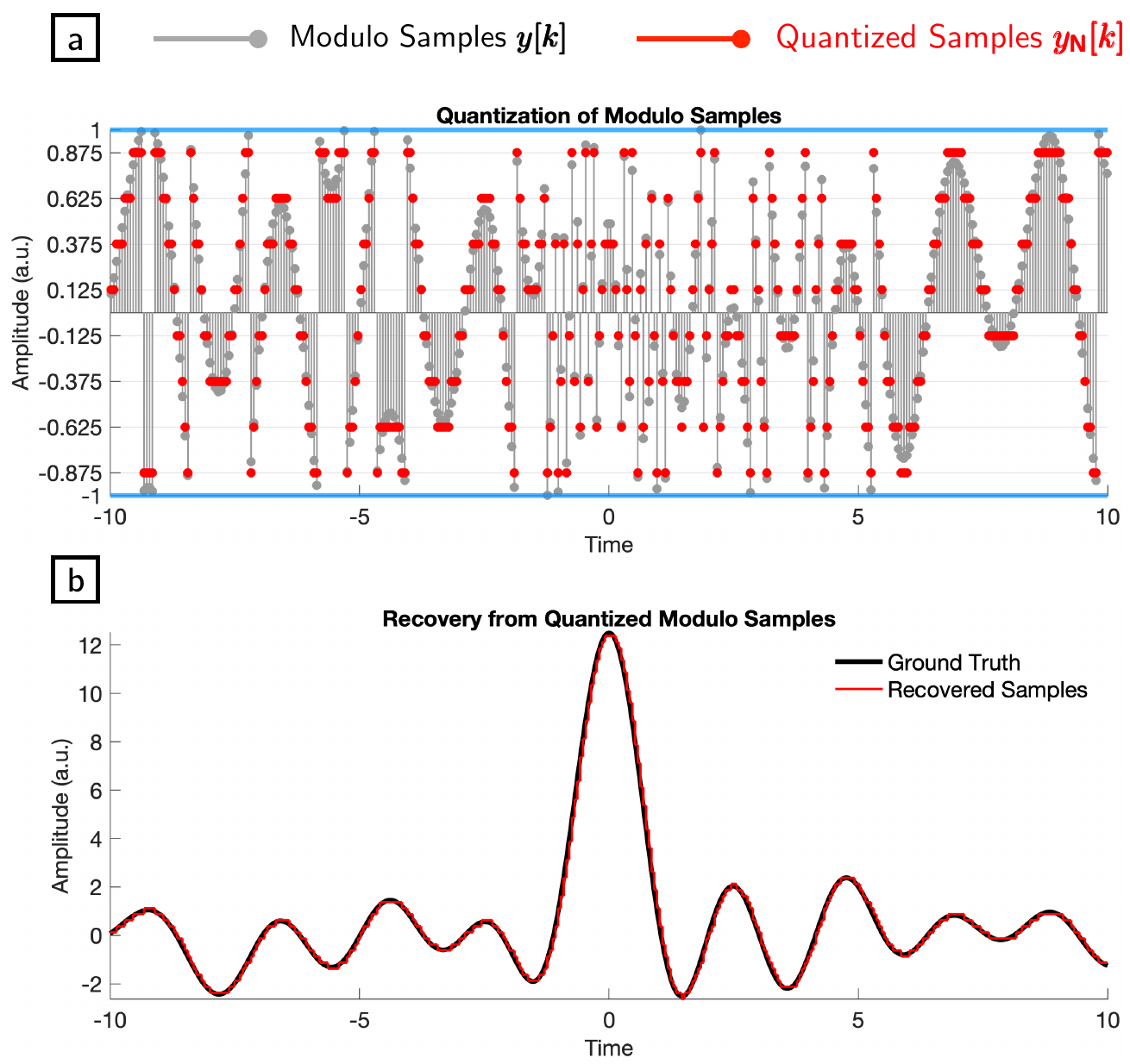}
\caption{Algorithm \ref{alg:ModSamp} is stable in the presence of bounded noise which is the model for quantization. (a) Modulo samples $y[k]$ and its $3$-bit quantized version $\YN\left[ k \right] = y[k] + \eta\left[ k \right]$. Here $\B = 12.5$, $\lambda = 1$ and$\left\| \eta  \right\|_\infty \leqslant \lambda/8$. (b) Ground truth and recovered modulo samples $\widetilde \gamma$ using Algorithm \ref{alg:ModSamp}.}
\label{fig:Result_2}
 \end{figure}

\section{Conclusion}

\label{sec:Conc}

\subsection{Summary of Results} In this work, we have described a new sensing and recovery scheme that overcomes the dynamic range barrier fundamental to digital sensing and sampling theory. Our approach harnesses a co-design between hardware and algorithms. On the hardware front, a modulo non-linearity injected in the sensing process scrambles high-dynamic-range information into low-dynamic range samples. Theorem \ref{thm:IUS} addresses the question of uniqueness; a bandlimited function is uniquely characterized by its modulo samples provided that the sampling rate is faster than critical sampling density. On the algorithmic front, given modulo samples of an $\Omega$-bandlimited function, the sampling rate $\T\leq 1/\l 2\Omega e\r$ guarantees recovery. This Unlimited Sampling Theorem is formally stated in Theorem \ref{thm:UST}; stability with respect to quantization noise and other bounded noise is established in Theorem \ref{thm:USTN}. The proposed reconstruction approach hinges on the observation that the modulo and the (higher-order) difference operators commute in a certain sense. 

\subsection{Future Directions}
\begin{enumerate}[leftmargin=*,label= $\bullet$]
\itemsep5pt
\item {\bf Implementation Issues}      \aq{Implementing our approach in practice remains an integral aspect of future research. In particular, we are working
to design computational techniques to address inaccuracies and instabilities arising in the implementation of our techniques in hardware. We are currently exploring such instabilities in the context of a prototype ADC. Important challenges include timing jitter and that our sensing pipeline requires ADCs without anti-aliasing filter, which makes them more sensitive to noise. For this reason, noise control needs to be implemented in alternative ways.}

  \item {\bf Wider Classes of Signals and Function Spaces} A natural question to ask is how our results can be extended to a wider class of signals such as smooth functions (shift-invariant spaces), sparse signals, parametric classes of signals among others. Some partial answers are provided in \cite{Rudresh2018,Cucuringu2018,Musa2018,Shah2018,Bhandari2018,Bhandari2018a} but this is an open area with several unanswered questions. Unlimited sampling in spline-spaces was recently discussed in \cite{Bhandari:2020a}. Another interesting variation is to consider multi-dimensional signal models defined on arbitrary lattices, such as \cite{Bouis:2020}.
  
  
\item {\bf Wider Classes of Inverse Problems} The unlimited sensing framework leads to a new class of inverse problems due to the modulo non-linearity. It can be further combined with several interesting inverse problems where dynamic range poses a natural limitation. More generally, one may consider $y_n =  \l\mathscr{M}_\lambda\circ\mathscr{T}\r \l x_n\r$ where $\mathscr{T}$ is some operator. For instance, using $\mathscr{T}$ as the Radon Transform, recently we have introduced the \emph{Modulo Radon Transform} in the context of high-dynamic-range computed tomography \cite{Bhandari:2020}.

\item {\bf New Sampling Architectures} The unlimited sensing framework can be combined with different sampling architectures. For instance, in \cite{Graf2019}, one-bit recovery is proposed based on the sigma-delta modulation scheme. Similarly, this work can be combined with multi-channel acquisition approaches which naturally find applications in imaging, for example, time-of-flight imaging \cite{Bhandari:2018} and our upcoming work on direction-of-arrival estimation from modulo samples \cite{FernandezMenduina:2020}. 
    
\item {\bf Robustness to Outliers} While our results yield recovery for  bounded noise, our approach is not yet stable with respect to outliers. Given the redundancy of the data, we expect that it will be possible to address this issue and, subsequently, study unbounded noise models such as Gaussian and Poissonian noise.

\end{enumerate}

\ifCLASSOPTIONcaptionsoff
\newpage
\fi

%
%
%

\bibliographystyle{IEEEtran_url}
\bibliography{IEEEabrv,US_TSP}

\end{document}